\newcommand{\stam}[1]{}
\newcommand{\zug}[1]{\langle #1  \rangle}
\newcommand{\set}[1]{\{ #1 \}}
\newcommand{\Max}{{Max}\xspace}
\newcommand{\Min}{{Min}\xspace}
\newcommand{\RTB}{\texttt{RTB}\xspace}
\newcommand{\Exp}{\mathbb{E}}
\newcommand{\G}{{\cal G}}
\newcommand{\Q}{\mathbb{Q}}
\newcommand{\Qpos}{\Q_{\geq 0}}
\newcommand{\Nat}{\mbox{I$\!$N}}
\newcommand{\Real}{\mbox{I$\!$R}}
\newcommand{\St}{\mbox{St}}
\newcommand{\Pot}{\mbox{Pot}}
\newcommand{\PO}{Player~$1$\xspace}
\newcommand{\PT}{Player~$2$\xspace}
\newcommand{\PLi}{Player~$i$\xspace}
\newcommand{\thresh}{\texttt{Th}\xspace}
\newcommand{\MP}{\texttt{MP}\xspace}
\title{Bidding Mechanisms in Graph Games} 
\titlerunning{Bidding Mechanisms in Graph Games}
\author{Guy Avni}{IST Austria, Austria}{guy.avni@ist.ac.at}{}{}
\author{Thomas A. Henzinger}{IST Austria, Austria}{tah@ist.ac.at}{}{}
\author{{\DJ}or{\dj}e \v{Z}ikeli\'c}{IST Austria, Austria}{djordje.zikelic@ist.ac.at}{}{}
\authorrunning{G. Avni, T. Henzinger, and \DJ. \v{Z}ikeli\'c}
\keywords{Bidding games, Richman bidding, poorman bidding, taxman bidding, mean-payoff games, random-turn based games}
\begin{document}

\maketitle

\begin{abstract}
In two-player games on graphs, the players move a token through a graph to produce a finite or infinite path, which determines the qualitative winner or quantitative payoff of the game. We study {\em bidding games} in which the players bid for the right to move the token. Several bidding rules were studied previously. In {\em Richman} bidding, in each round, the players simultaneously submit bids, and the higher bidder moves the token and pays the other player. {\em Poorman} bidding is similar except that the winner of the bidding pays the ``bank'' rather than the other player. {\em Taxman} bidding spans the spectrum between Richman and poorman bidding. They are parameterized by a constant $\tau \in [0,1]$: portion $\tau$ of the winning bid is paid to the other player, and portion $1-\tau$ to the bank. While finite-duration (reachability) taxman games have been studied before, we present, for the first time, results on {\em infinite-duration} taxman games. It was previously shown that both Richman and poorman infinite-duration games with qualitative objectives reduce to reachability games, and we show a similar result here. Our most interesting results concern quantitative taxman games, namely {\em mean-payoff} games, where poorman and Richman bidding differ significantly. A central quantity in these games is the {\em ratio} between the two players' initial budgets. While in poorman mean-payoff games, the optimal payoff of a player depends on the initial ratio, in Richman bidding, the payoff depends only on the structure of the game. In both games the optimal payoffs can be found using  (different) probabilistic connections with {\em random-turn based} games in which in each turn, instead of bidding, a coin is tossed to determine which player moves.  While the value with Richman bidding equals the value of a random-turn based game with an un-biased coin, with poorman bidding, the bias in the coin is the initial ratio of the budgets. We give a complete classification of mean-payoff taxman games that is based on a probabilistic connection: the value of a taxman bidding game with parameter $\tau$ and initial ratio $r$, equals the value of a random-turn based game that uses a coin with bias $F(\tau, r) = \frac{r+\tau\cdot (1-r)}{1+\tau}$. Thus, we show that Richman bidding is the exception; namely, for every $\tau <1$, the value of the game depends on the initial ratio. Our proof technique simplifies and unifies the previous proof techniques for both Richman and poorman bidding. 
\end{abstract}

\section{Introduction}
Two-player infinite-duration games on graphs are a central class of games in formal verification \cite{AG11}, where they are used, for example, to solve synthesis \cite{PR89}, and they have deep connections to foundations of logic \cite{Rab69}. A graph game proceeds by placing a token on a vertex in the graph, which the players move throughout the graph to produce an infinite path (``play'') $\pi$. The game is zero-sum and $\pi$ determines the winner or payoff. Graph games can be classified according to the players' objectives. For example, the simplest objective is {\em reachability}, where \PO wins iff an infinite path visits a designated target vertex. Another classification of graph games is the {\em mode of moving} the token. The most studied mode of moving is {\em turn based}, where the players alternate turns in moving the token.

In {\em bidding games}, in each turn, an ``auction'' is held between the two players in order to determine which player moves the token. The bidding mode of moving was introduced in \cite{LLPSU99,LLPU96} for reachability games, where the following bidding rules where defined. In {\em Richman} bidding (named after David Richman), each player has a budget, and before each turn, the players submit bids simultaneously, where a bid is legal if it does not exceed the available budget. The player who bids higher wins the bidding, pays the bid to the other player, and moves the token. A second bidding rule called {\em poorman} bidding in \cite{LLPSU99}, is similar except that the winner of the bidding pays the ``bank'' rather than the other player. Thus, the bid is deducted from his budget and the money is lost. A third bidding rule on which we focus in this paper, called {\em taxman} in \cite{LLPSU99} spans the spectrum between poorman and Richman bidding. Taxman bidding is parameterized by $\tau \in [0,1]$: the winner of a bidding pays portion $\tau$ of his bid to the other player and portion $1-\tau$ to the bank. Taxman bidding with $\tau=1$ coincides with Richman bidding and taxman bidding with $\tau=0$ coincides with poorman bidding. 

Bidding games are relevant for several communities in Computer Science. In formal methods, graph games are used to reason about systems. Poorman bidding games naturally model concurrent systems where processes pay the scheduler for moving. Block-chain technology like Etherium is an example of such a system, which is a challenging to formally verify \cite{CGV18,ABC16}. In Algorithmic Game Theory, auction design is a central research topic that is motivated by the abundance of auctions for online advertisements \cite{Mut09}. Infinite-duration bidding games can model ongoing auctions and can be used to devise bidding strategies for objectives like: ``In the long run, an advertiser's ad should show at least half of the time''. In Artificial Intelligence, bidding games with Richman bidding have been used to reason about combinatorial negotiations \cite{MKT18}. Finally, {\em discrete-bidding} games \cite{DP10}, in which the granularity of the bids is restricted by assuming that the budgets are given using coins, have been studied mostly for recreational games, like bidding chess \cite{BP09}.

Both Richman and poorman infinite-duration games have a surprising, elegant, though different, mathematical structure as we elaborate below. Our study of taxman bidding aims at a better understanding of this structure and at shedding light on the differences between the seemingly similar bidding rules. 

A central quantity in bidding games is the {\em initial ratio} of the players budgets. Formally, assuming that, for $i \in \set{1,2}$, \PLi's initial budget is $B_i$, we say that \PO's initial ratio is $B_1/(B_1+B_2)$. The central question that was studied in \cite{LLPSU99} regards the existence of a necessary and sufficient initial ratio to guarantee winning the game. Formally, the {\em threshold ratio} in a vertex $v$, denoted $\thresh(v)$, is such that if \PO's initial ratio exceeds $\thresh(v)$, he can guarantee winning the game, and if his initial ratio is less than $\thresh(v)$, \PT can guarantee winning the game\footnote{When the initial ratio is exactly $\thresh(v)$, the winner depends on the mechanism with which ties are broken. Our results do not depend on a specific tie-breaking mechanism.Tie-breaking mechanisms are particularly important in discrete-bidding games \cite{AAH19}.}. Existence of threshold ratios in reachability games for all three bidding mechanisms was shown in \cite{LLPSU99}. 

Richman reachability games have an interesting probabilistic connection \cite{LLPU96}. To state the connection, we first need to introduce {\em random-turn based} games. Let $p \in [0,1]$. In a random-turn based game that is parameterized by $p$, in each turn, rather than bidding, the player who moves is chosen by throwing a (possibly) biased coin: with probability $p$, \PO chooses how to move the token, and with probability $1-p$, \PT chooses. Formally, a random-turn based game is a special case of a stochastic game \cite{Con92}. Consider a Richman game $\G$. We construct a ``uniform'' random-turn based game on top of $\G$, denoted $\RTB^{0.5}(\G)$, in which we throw an unbiased coin in each turn. The objective of \PO remains reaching his target vertex. It is well known that each vertex in $\RTB^{0.5}(\G)$ has a {\em value}, which is, informally, the probability of reaching the target when both players play optimally, and which we denote by $val(\RTB^{0.5}(\G), v)$. We are ready to state the probabilistic connection: For every vertex $v$ in the Richman game $\G$, the threshold ratio in $v$ equals $1-val(\RTB(\G), v)$. We note that such a connection is not known and is unlikely to exist in reachability games with neither poorman nor taxman bidding. Random-turn based games have been extensively studied in their own right, mostly with unbiased coin tosses, since the seminal paper \cite{PSSW09}. 

Infinite-duration bidding games have been recently studied with Richman \cite{AHC17} and poorman \cite{AHI18} bidding. For qualitative objectives, namely games in which one player wins and the other player loses, both bidding rules have similar properties. By reducing general qualitative games to reachability games, it is shown that threshold ratios exist for both types of bidding rules. We show a similar result for qualitative games with taxman bidding.

Things get interesting in {\em mean-payoff} games, which are quantitative games: an infinite play has a {\em payoff}, which is \PO's reward and \PT's cost (see an example of a mean-payoff game in Figure~\ref{fig:mean-payoff}). We thus call the players in a mean-payoff game \Max and \Min, respectively. We focus on games that are played on strongly-connected graphs. With Richman bidding \cite{AHC17}, the initial budget of the players does not matter: A mean-payoff game $\G$ has a value $c \in \Real$ that depends only on the structure of the game such that \Min can guarantee a cost of at most $c$ with any positive budget, and with any positive budget, \Max can guarantee a payoff of at least $c-\epsilon$, for every $\epsilon > 0$. Moreover, the value $c$ of $\G$ equals the value of a random-turn based game $\RTB^{0.5}(\G)$ that is constructed on top of $\G$. Since $\G$ is a mean-payoff game, $\RTB^{0.5}(\G)$ is a mean-payoff stochastic game, and its value, which again, is a well-known concept, is the expected payoff when both players play optimally.

\begin{figure}[t]
\begin{minipage}[b]{0.4\linewidth}
\centering
\includegraphics[height=1.5cm]{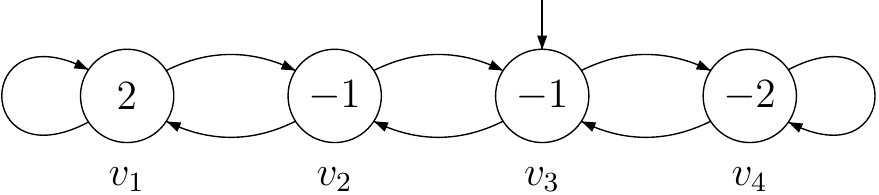}
\end{minipage}
\hspace{0.05\linewidth}
\begin{minipage}[b]{0.51\linewidth}
\centering
\includegraphics[height=3cm]{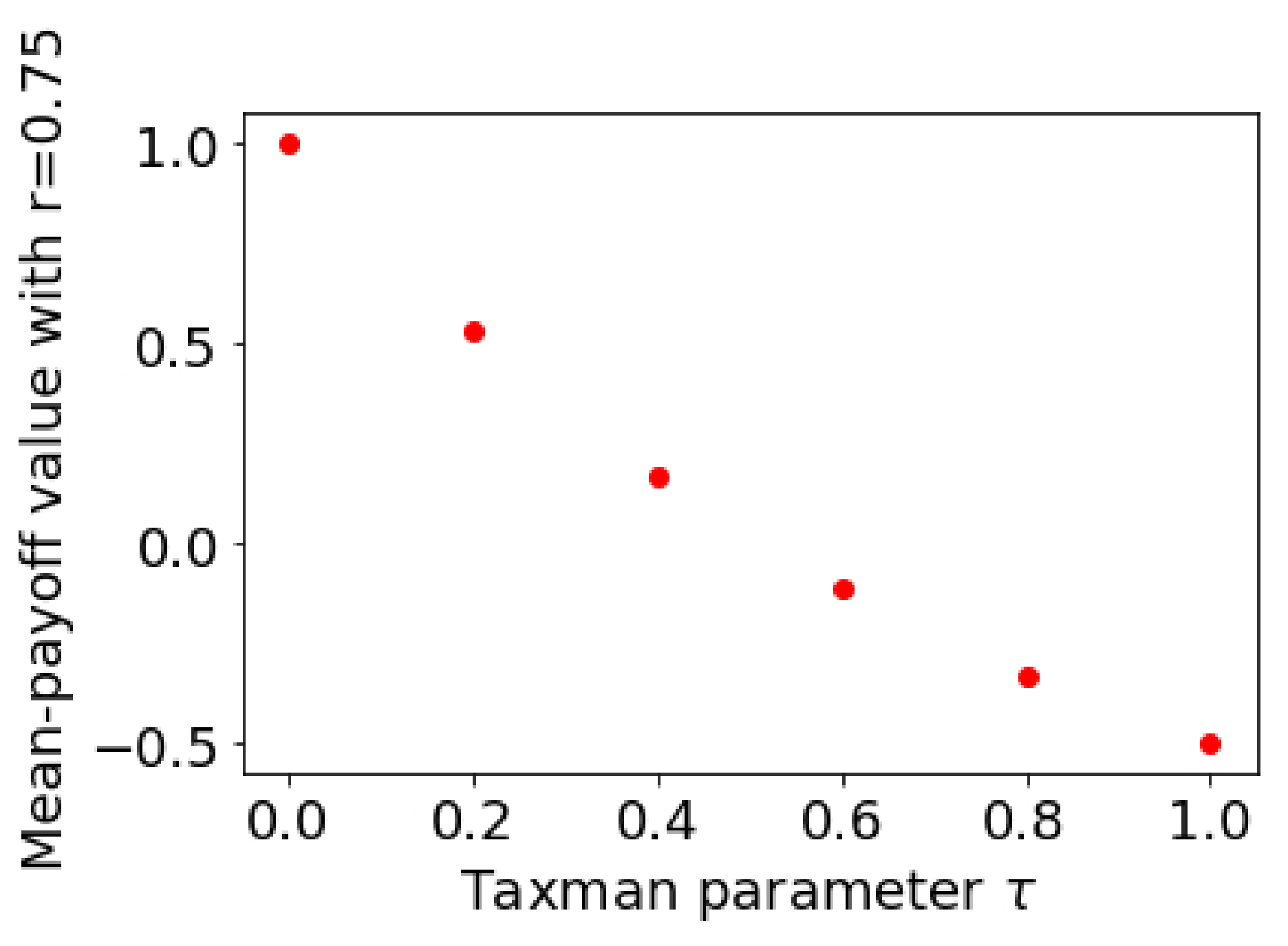}
\label{fig:plot}
\end{minipage}
\caption{On the left, a mean-payoff game $\G$. On the right, the mean-payoff value of $\G$, where the initial ratio is fixed to $0.75$ and the taxman parameter $\tau$ varies. The value of $\G$ with Richman bidding is $-0.5$, with poorman bidding, it is $1$, and, for example, with $\tau = 0.2$, it is $0.533$.}
\label{fig:mean-payoff}
\end{figure}

Poorman mean-payoff games have different properties. Unlike with Richman bidding, the value of the game depends on the initial ratio. That is, with a higher initial ratio, \Max can guarantee a better payoff. More surprisingly, poorman mean-payoff games have a probabilistic connection, which is in fact richer than for Richman bidding. This is surprising since poorman reachability games do not have a probabilistic connection and reachability games tend to be simpler than mean-payoff games. The connection for poorman games is the following: Suppose \Max's initial ratio is $r \in [0,1]$ in a game $\G$. Then, the value in $\G$ with respect to $r$ is the value of the random-turn based game $\RTB^r(\G)$ in which in each turn, we toss a biased coin that chooses \Max with probability $r$ and \Min with probability $1-r$. 

Given this difference between the two bidding rules, one may wonder how do mean-payoff taxman games behave, since these bidding rules span the spectrum between Richman and poorman bidding. Our main contribution is a complete solution to this question: we identify a probabilistic connection for a taxman game $\G$ that depends on the parameter $\tau$ of the bidding and the initial ratio $r$. That is, we show that the value of the game equals the value of the random-turn based game $\RTB^{F(\tau,r)}(\G)$, where $F(\tau,r)=\frac{r+\tau\cdot(1-r)}{1+\tau}$. The construction gives rise to optimal strategies w.r.t. $\tau$ and the initial ratio. As a sanity check, note that for $\tau=1$, we have $F(\tau, r) = 0.5$, which agrees with the result on Richman bidding, and for $\tau=0$, we have $F(\tau, r) = r$, which agrees with the result on poorman bidding. In Figure~\ref{fig:mean-payoff}, we depict some mean-payoff values for a fixed initial ratio and varying taxman parameter. Previous results only give the two endpoints in the plot, and the mid points in the plot are obtained using the results in this paper.

The main technical challenge is constructing an optimal strategy for \Max. The construction involves two components. First, we assign an ``importance'' to each vertex $v$, which we call {\em strength} and denote $\St(v)$. Intuitively, if $\St(v) > \St(u)$, then it is more important for \Max to move in $v$ than in $u$. Second, when the game reaches a vertex $v$, \Max's bid is a careful normalization of $\St(v)$ so that changes in \Max's ratio are matched with the accumulated weights in the game. Finding the right normalization is intricate and it consists of the main technical contribution of this paper. Previous such normalizations were constructed for Richman and poorman mean-payoff games \cite{AHC17,AHI18}. The construction for Richman bidding is much more complicated than the one we present here. The construction for poorman bidding is ad-hoc and does not generalize. Our construction for taxman bidding thus unifies these constructions and simplifies them. It uses techniques that can generalize beyond taxman bidding. Finally, we study, for the first time, complexity problems for taxman games. 

Due to lack of space, some proofs appear in the appendix.

\section{Preliminaries}
A graph game is played on a directed graph $G = \zug{V, E}$, where $V$ is a finite set of vertices and $E \subseteq V \times V$ is a set of edges. The {\em neighbors} of a vertex $v \in V$, denoted $N(v)$, is the set of vertices $\set{u \in V: \zug{v,u} \in E}$. A {\em path} in $G$ is a finite or infinite sequence of vertices $v_1,v_2,\ldots$ such that for every $i \geq 1$, we have $\zug{v_i,v_{i+1}} \in E$. 

\paragraph*{Bidding games} Each \PLi has a budget $B_i \in \Real^{\geq 0}$. In each turn a bidding determines which player moves the token. Both players simultaneously submit bids, where a bid $b_i$ for \PLi is legal if $b_i \leq B_i$. The player who bids higher wins the bidding, where we assume some mechanism to break ties, e.g., always giving \PO the advantage, and our results are not affected by the specific tie-breaking mechanism at use. The winner moves the token and pays his bid, where we consider three bidding mechanisms that differ in where the winning bid is paid. Suppose \PO wins a bidding with his bid of $b$.

\begin{itemize}
\item In {\bf Richman} bidding, the winner pays to the loser, thus the new budgets are $B_1 - b$ and $B_2 + b$.
\item In {\bf poorman} bidding, the winner pays to the bank, thus the new budgets are $B_1 - b$ and $B_2$.
\item In {\bf taxman} bidding with parameter $\tau \in [0,1]$, the winner pays portion $\tau$ to the other player and $(1-\tau)$ to the bank, thus the new budgets are $B_1 - b$ and $B_2 + (1-\tau) \cdot b$. 
\end{itemize}

A central quantity in bidding games is the {\em ratio} of a player's budget from the total budget.
\begin{definition} ({\bf Ratio})
Suppose the budget of \PLi is $B_i$, for $i \in \set{1,2}$, at some point in the game. Then, \PLi's {\em ratio} is $B_i/(B_1 + B_2)$.  The {\em initial ratio} refers to the ratio of the initial budgets, namely the budgets before the game begins. We restrict attention to games in which both players start with positive initial budgets, thus the initial ratio is in $(0,1)$. 
\end{definition}

\paragraph*{Strategies and plays}
A {\em strategy} is a recipe for how to play a game. It is a function that, given a finite {\em history} of the game, prescribes to a player which {\em action} to take,  where we define these two notions below. For example, in turn-based games, a strategy takes as input, the sequence of vertices that were visited so far, and it outputs the next vertex to move to. In bidding games, histories and strategies are more involved as they maintain the information about the bids and winners of the bids. Formally, a history in a bidding game is $\pi = \zug{v_1, b_1, i_1}, \ldots, \zug{v_k, b_k, i_k}, v_{k+1} \in (V \times \Real \times \set{1,2})^*\cdot V$, where for $1 \leq j \leq k+1$, the token is placed on vertex $v_j$ at round $j$, for $1 \leq j \leq k$, the winning bid is $b_j$ and the winner is Player~$i_j$. Consider a finite history $\pi$. For $i \in \set{1,2}$, let $W_i(\pi) \subseteq \set{1,\ldots, k}$ denote the indices in which \PLi is the winner of the bidding in $\pi$. Let $B^I_i$ be the initial budget of \PLi. \PLi's budget following $\pi$, denoted $B_i(\pi)$, depends on the bidding mechanism. For example, in Richman bidding, $B_1(\pi) = B^I_i - \sum_{j \in W_1(\pi)} b_j + \sum_{j \in W_2(\pi)} b_j$, $B_2$ is defined dually, and the definition is similar for taxman and poorman bidding. Given a history $\pi$ that ends in $v$, a strategy for \PLi prescribes an action $\zug{b, v}$, where $b \leq B_i(\pi)$ is a bid that does not exceed the available budget and $v$ is a vertex to move to upon winning, where we require that $v$ is a neighbor of $v_{k+1}$.  An initial vertex, initial budgets, and two strategies for the players determine a unique infinite {\em play} $\pi$ for the game. The vertices that $\pi$ visits form an infinite path $path(\pi)$.

\paragraph*{Objectives} 
An objective $O$ is a set of infinite paths. \PO wins an infinite play $\pi$ iff $path(\pi) \in O$. We call a strategy $f$ {\em winning} for \PO w.r.t. an objective $O$ if for every strategy $g$ of \PT the play that $f$ and $g$ determine is winning for \PO. Winning strategies for \PT are defined dually. We consider the following qualitative objectives:
\begin{enumerate}
    \item In \textit{reachability games}, \PO has a target vertex $t$ and an infinite play is winning iff it visits $t$.
    \item In \textit{parity games}, each vertex is labeled with an index in $\{1,\dots,d\}$. An infinite path is winning for \PO iff the parity of maximal index visited infinitely often is odd.
    \item \textit{Mean-payoff games} are played on weighted directed graphs, with weights given by a function $w:V \rightarrow \Q$. Consider an infinite path $\eta =v_1,v_2,\dots \in V^\omega$. For $n \in \Nat$, the prefix of length $n$ of $\eta$ is $\eta^n$, and we define its {\em energy} to be $E(\eta^n)=\sum_{i=1}^{n} w(v_i)$. The {\em payoff} of $\eta$ is $\MP(\eta) = \liminf_{n\rightarrow \infty}E(\eta^n)/n$. \PO wins $\eta$ iff $\MP(\eta) \geq 0$.
\end{enumerate}

Mean-payoff games are quantitative games. We think of the payoff as \PO's reward and \PT's cost, thus in mean-payoff games, we refer to \PO as \Max and to \PT as \Min. 

\stam{
\begin{definition}
{\bf (Mean-payoff value)} The {\em value} of a mean-payoff game is $c \in \Real$ iff 

\end{definition}
}

\paragraph*{Threshold ratios}
The first question that arrises in the context of bidding games asks what is the necessary and sufficient initial ratio to guarantee an objective. 

\begin{definition} ({\bf Threshold ratios})
Consider a bidding game $\G$, a vertex $v$, an initial ratio $r$, and an objective $O$ for \PO. The threshold ratio in $v$, denoted $\thresh(v)$, is a ratio in $[0,1]$ such that if $r > \thresh(v)$, then \PO has a winning strategy that guarantees that $O$ is satisfied, and if $r < \thresh(v)$, then \PT has a winning strategy that violates $O$.
\end{definition}

\paragraph*{Random turn-based games}
A {\em stochastic game} \cite{Con92} is a graph game in which the vertices are partitioned between two players and a {\em nature} player. As in turn-based games, whenever the game reaches a vertex that is controlled by \PLi, for $i =1,2$, he choses how the game proceeds, and whenever the game reaches a vertex $v$ that is controlled by nature, the next vertex is chosen according to a probability distribution that depends only on $v$. 

Consider a bidding game $\G$ that is played on a graph $\zug{V, E}$. The {\em random-turn based game} with ratio $r \in [0,1]$ that is associated with $\G$ is a stochastic game that intuitively simulates the following process. In each turn we throw a biased coin that turns heads with probability $r$ and tails with probability $1-r$. If the coin turns heads, then \PO moves the token, and otherwise \PT moves the token. Formally, we define $\RTB^r(\G) = \zug{V_1, V_2, V_N, E, \Pr}$, where each vertex in $V$ is split into three vertices, each controlled by a different player, thus for $\alpha \in \set{1, 2, N}$, we have $V_\alpha = \set{v_\alpha: v \in V}$, nature vertices simulate the fact that \PO chooses the next move with probability $r$, thus $\Pr[v_N,v_1] = r = 1- \Pr[v_N,v_2]$, and reaching a vertex that is controlled by one of the two players means that he chooses the next move, thus $E = \set{\zug{v_\alpha, u_N}: \zug{v,u} \in E \text{ and } \alpha \in \set{1, 2}}$. When $\G$ is a mean-payoff game, the vertices are weighted and we define the weights of $v_1, v_2$, and $v_N$ to be equal to the weight of $v$. 

The following definitions are standard, and we refer the reader to \cite{Put05} for more details. A strategy in a stochastic game is similar to a turn-based game; namely, given the history of vertices visited so far, the strategy chooses the next vertex. Fixing two such strategies $f$ and $g$ for both players gives rise to a distribution $D(f,g)$ on infinite paths. Intuitively, \PO's goal is to maximize the probability that his objective is met. An {\em optimal} strategy for \PO guarantees that the objective is met with probability at least $c$ and, intuitively, he cannot do better, thus \PT has a strategy that guarantees that the objective is violated with probability at least $(1-c)$. It is well known that optimal positional strategies exist for the objectives that we consider.

\begin{definition}
({\bf Values in stochastic games})
Consider a bidding game $\G$, let $r \in [0,1]$, and consider two optimal strategies $f$ and $g$ for the two players in $\RTB^r(\G)$. When $\G$ is a qualitative game with objective $O$, the {\em value} of $\RTB^r(\G)$, denoted $val(\RTB^r(\G))$, is $\Pr_{\eta \sim D(f,g)} \Pr[\eta \in O]$. When $\G$ is a mean-payoff game, the {\em mean-payoff value} of $\RTB^r(\G)$, denoted $\MP(\RTB^r(\G))$, is $\Exp_{\eta \in D(f,g)} \MP(\eta)$.
\end{definition}

\section{Qualitative Taxman Games}
We start by describing the results on reachability bidding games.

\begin{theorem}
\label{thm:reach}
\cite{LLPSU99}
Consider a reachability bidding game $\G$ and a vertex $v$. The threshold ratio exists in $v$ with Richman, poorman, and taxman bidding. Moreover, threshold ratios have the following properties. For the target vertex $t$ of \PO, we have $\thresh(t) = 0$. For every vertex $v$ from which there is no path to $t$, we have $\thresh(v)=1$. Consider some other vertex $v$ and denote $v^+,v^- \in N(v)$ the vertices for which $\thresh(v^-) \leq \thresh(u) \leq \thresh(v^+)$, for every $u \in N(v)$. 
\begin{itemize}
\item In Richman bidding, we have $\thresh(v) = \frac{1}{2}\big(\thresh(v^+) + \thresh(v^-)\big)$. Moreover, $\thresh(v)$ is a rational number and satisfies $\thresh(v) = 1-val(\RTB(\G), v)$. 
\item In poorman bidding, we have $\thresh(v) = \thresh(v^+)/(1+\thresh(v^+) -\thresh(v^-))$.
\item In taxman bidding with parameter $\tau$, we have $\thresh(v) = \big(\thresh(v^-)+\thresh(v^+)-\tau \cdot \thresh(v^-)\big)/\big(2-\tau \cdot (1+\thresh(v^-)-\thresh(v^+))\big)$.
\end{itemize}
\end{theorem}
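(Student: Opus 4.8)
The plan is to handle the three bidding rules uniformly. First I would \emph{define} candidate values $\thresh(\cdot)$ as the solution of the stated recursive system, and then \emph{certify} that they are the true threshold ratios by exhibiting a winning strategy for \PO whenever the initial ratio $r$ satisfies $r > \thresh(v)$, and a winning strategy for \PT whenever $r < \thresh(v)$. For existence of the candidate values I would use value iteration: initialize $\thresh^{(0)}(t) = 0$ and $\thresh^{(0)}(v) = 1$ for every $v \neq t$, and iterate the operator obtained by reading each right-hand side as a function of the neighbours' current estimates, recomputing $v^+$ and $v^-$ in each round. This operator is monotone and maps $[0,1]^V$ into itself, so the iterates converge; the vertices from which $t$ is unreachable form an absorbing set that stays pinned at $1$, which is exactly the boundary clause $\thresh(v) = 1$, while $\thresh(t) = 0$ is immediate.

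All three recursions arise from a single \emph{indifference} computation, which is the clean way to see why they are correct and how they unify. Fix a non-target vertex $v$, normalize the total budget to $1$, and write $r$ for \PO's budget. One asks for the bid $b$ and ratio $r$ at which \PO is exactly on the fence: bidding $b$, winning, and moving to $v^-$ should land her at ratio exactly $\thresh(v^-)$, while losing the same bid $b$ to \PT who then moves to $v^+$ should land her at ratio exactly $\thresh(v^+)$. Under taxman bidding these two outcomes read $\frac{r - b}{1 - \tau b} = \thresh(v^-)$ and $\frac{r + (1-\tau) b}{1 - \tau b} = \thresh(v^+)$, and eliminating $b$ from this pair yields precisely the stated taxman formula; the two extreme values $\tau = 0$ and $\tau = 1$ recover the Richman and poorman formulas, with the arithmetic mean appearing in the Richman case because Richman bidding conserves the total budget ($1 - \tau b = 1$).

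For the two strategy directions I would let \PO, at each visited vertex $v$, bid the critical amount $b(v)$ from the indifference computation (scaled to her current budget) and move to $v^-$ on winning, with \PT's strategy symmetric (bid $b(v)$, move to $v^+$). The invariant to maintain for \PO is that her current ratio is at least $\thresh(v)$ at the current vertex $v$: winning and moving to $v^-$ lands her at ratio $\geq \thresh(v^-)$ by the choice of $b(v)$, and if \PT wins she must bid at least $b(v)$, which by monotonicity of the fractional-linear update in the bid leaves \PO at ratio $\geq \thresh(v^+) \geq \thresh(u)$ for whatever neighbour $u$ \PT moves to. The delicate point, which I expect to be the main obstacle, is upgrading this \emph{safety} invariant to a genuine \emph{reachability} guarantee: ``ratio $\geq$ threshold'' does not by itself drive the token to $t$. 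Here I would use that \PO starts with a strict surplus $r - \thresh(v) > 0$ together with a rank supplied by the value iteration, so that each \PO-win strictly decreases the rank while \PT-wins only raise \PO's ratio, and the absorbing unreachable set is excluded since entering it would force ratio $1$. The accounting is cleanest for Richman, where the conserved total budget makes the surplus additive; for poorman and taxman the total budget shrinks and the ratio dynamics are fractional-linear, so the surplus must be tracked multiplicatively, which is the genuinely fiddly part.

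Finally, for the Richman-specific claims I would match the equation $\thresh(v) = \frac{1}{2}(\thresh(v^+) + \thresh(v^-))$ with the optimality equation of $\RTB^{0.5}(\G)$. There the reachability value satisfies $val(v) = \frac{1}{2}(\max_{u \in N(v)} val(u) + \min_{u \in N(v)} val(u))$, since with probability $\frac{1}{2}$ \PO moves and picks the maximizing neighbour and with probability $\frac{1}{2}$ \PT picks the minimizing one. Substituting $\thresh(v) = 1 - val(\RTB(\G), v)$ swaps $\max$ and $\min$, identifies $v^+$ (maximal threshold, minimal value) and $v^-$ (minimal threshold, maximal value) with the two extremal neighbours, and turns the equation into exactly the Richman recursion; because the threshold recursion and the shifted value equation form the same monotone system with identical boundary data, their value-iteration limits coincide, giving $\thresh(v) = 1 - val(\RTB(\G), v)$. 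Rationality of $\thresh(v)$ is then inherited from the fact that the reachability value of a finite stochastic game is the solution of a linear system with rational coefficients.
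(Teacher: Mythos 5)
First, a framing point: the paper does not prove this theorem at all --- it is imported wholesale from \cite{LLPSU99} --- so there is no in-paper argument to compare yours against, and your proposal has to stand on its own. Its core device, the indifference computation, is the right one and does reproduce the displayed formulas: eliminating $b$ from your two equations gives exactly $\big(\thresh(v^-)+\thresh(v^+)-\tau\thresh(v^-)\big)/\big(2-\tau(1+\thresh(v^-)-\thresh(v^+))\big)$. One caution: your budget update (total shrinking to $1-\tau b$) treats $\tau$ as the share paid to the \emph{bank}, so in your derivation $\tau=0$ is Richman and $\tau=1$ is poorman. That is in fact the convention under which the theorem's taxman formula specializes correctly to the other two bullets, but it is the opposite of the convention fixed in the paper's preliminaries and used in all of its later results ($\tau$ = share paid to the opponent, $\tau=1$ Richman). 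You should say explicitly which convention you adopt; as written your algebra is internally consistent with the formula as displayed.

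The genuine gap is the one you flag yourself but do not close: upgrading the safety invariant to a reachability guarantee. A rank that decreases on \PO-wins while \PT-wins ``only raise \PO's ratio'' does not terminate, because nothing stops \PT from winning \emph{every} bidding: if \PO bids exactly the critical amount, \PT can overbid by an arbitrarily small margin, keep the token away from $t$ forever, and merely preserve \PO's surplus rather than grow it. The standard repair --- and the mechanism the paper itself uses for the special all-thresholds-zero case in Lemma~\ref{lem:SCC-qual} and Appendix~\ref{app:SCC-qual} --- is quantitative: \PO bids the critical amount \emph{plus} an increment drawn from a geometric partition of his surplus, chosen so that every bidding \PT wins increases \PO's ratio by an amount bounded away from zero; since the ratio is bounded by $1$, \PT can win only finitely many such padded biddings before \PO can afford enough consecutive wins to march to $t$. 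Without this padding you have only shown that \PO never becomes losing. A second, coupled gap: your value iteration converges to the greatest fixed point of a monotone operator, but Bellman-type reachability operators can have multiple fixed points, so you must argue that the limit you certify is the threshold function --- which is exactly what the (currently incomplete) two-sided strategy certification is supposed to deliver. The Richman-specific identification with $1-val(\RTB^{0.5}(\G),v)$ and the rationality claim are fine modulo the same uniqueness point.
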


It is shown in \cite{AHC17} and \cite{AHI18} that parity games with Richman and poorman bidding reduce to reachability games. We show a similar result for taxman games. The crucial step is the following lemma whose proof can be found in Appendix~\ref{app:SCC-qual}.

\begin{lemma}
\label{lem:SCC-qual}
Consider a taxman reachability game $\G$ that is played on the graph $\zug{V,E}$. Suppose that every vertex in $\G$ has a path to the target of \PO. Then,  for any taxman parameter $\tau$ and every $v \in V$, we have $\thresh(v) = 0$. That is, \PO wins from $v$ with any positive initial budget.
\end{lemma}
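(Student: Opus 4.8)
The plan is to avoid constructing a winning strategy directly and instead derive the statement from the structural characterization of threshold ratios in Theorem~\ref{thm:reach}, via a ``maximal threshold vertex'' argument. By that theorem the threshold ratio exists at every vertex, $\thresh(t)=0$, and every non-target vertex $v$ satisfies the taxman recurrence; note that since every vertex has a path to $t$ by hypothesis, every $v\neq t$ has a successor, so the recurrence genuinely applies. I would set $M=\max_{v\in V}\thresh(v)$ and assume towards a contradiction that $M>0$. Let $S=\set{v\in V:\thresh(v)=M}$; then $S\neq\emptyset$, and since $\thresh(t)=0<M$ we have $t\notin S$. The goal is to show that $S$ has no outgoing edges, which is impossible once $t\notin S$ must be reachable from $S$.

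The heart of the argument is to prove that $S$ is closed under taking neighbors, i.e.\ $N(v)\subseteq S$ for every $v\in S$. Fix such a $v$ and write $a=\thresh(v^-)$, $b=\thresh(v^+)$, so that $a\le b\le M$ and every neighbor $u$ satisfies $a\le\thresh(u)\le b$. Substituting into the taxman recurrence and clearing the denominator $2-\tau(1+a-b)$ --- which lies in $[2-\tau,2]$ and is therefore strictly positive, since $a\le b$ gives $1+a-b\le 1$ --- the equation $\thresh(v)=M$ turns into the affine relation
\[
2M-\tau M - a\,(1-\tau+\tau M) - b\,(1-\tau M)=0 .
\]
I would then observe that the left-hand side is non-increasing in both $a$ and $b$ (both coefficients are nonpositive for $M\le 1$) and vanishes at $a=b=M$, so for arguments $a,b\le M$ it is always $\ge 0$. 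Crucially, the coefficient of $a$ equals $-(1-\tau(1-M))$, which is \emph{strictly} negative because $M>0$ (it is $\ge 1-\tau$ when $\tau<1$, and equals $M$ when $\tau=1$). Hence the relation can hold only if $a=M$; and then $a\le b\le M$ forces $b=M$ automatically. Since $a=\thresh(v^-)$ is the least threshold among the neighbors of $v$, this shows $\thresh(u)=M$ for every $u\in N(v)$, i.e.\ $N(v)\subseteq S$.

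Finally I would close the argument combinatorially: the previous step says $S$ has no edge leaving it, yet every vertex of $S$ has a path to $t$, and such a path must eventually exit $S$ because $t\notin S$ --- a contradiction. Therefore $M=0$, i.e.\ $\thresh(v)=0$ for all $v\in V$, which is exactly the claim that \PO wins from every vertex with any positive budget. The only genuinely technical point I expect is the displayed monotonicity/equality-case computation; once that one-line inequality is pinned down (positivity of the denominator, sign of the $a$-coefficient, and the vanishing at $a=b=M$), the ``closed set with no exit contradicts reachability of $t$'' step is routine. As a sanity check, the same scheme specializes to the clean averaging argument for Richman bidding, which gives additional confidence that the equality case behaves as needed.
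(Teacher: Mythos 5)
Your proof is correct, but it takes a genuinely different route from the paper. The paper proves the lemma by explicitly constructing a winning strategy: starting with budget $\epsilon$, \PO makes a sequence of geometrically increasing bids $m, mr, \ldots, mr^{n-1}$ (with $r$ large enough that $\tau > \tfrac{2}{r-1}$) along a shortest path to $t$; either he wins $n$ consecutive biddings and reaches the target, or \PT wins some bidding, in which case a direct computation shows \PO's ratio increases by a fixed amount $b = \tfrac{mr}{(1-\epsilon)(r-1)} > 0$, so repeating the scheme eventually yields enough budget to win $n$ biddings in a row. You instead take the existence of threshold ratios and the taxman recurrence from Theorem~\ref{thm:reach} as a black box and run a maximum-principle argument: the set $S$ of vertices attaining $M = \max_v \thresh(v)$ is closed under neighbors (your affine identity $2M - \tau M - a(1-\tau+\tau M) - b(1-\tau M) = 0$ with strictly negative $a$-coefficient for $M>0$ is correct, as is the positivity of the denominator), which contradicts reachability of $t$ unless $M = 0$. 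Both arguments are sound. Your version is shorter and cleanly unifies the Richman, poorman, and taxman cases in one computation (the paper's construction actually needs a separate citation for one endpoint of $\tau$, where the condition $\tau > \tfrac{2}{r-1}$ cannot be met), but it is non-constructive and leans entirely on the cited Theorem~\ref{thm:reach}; the paper's proof is self-contained and produces an explicit strategy, which is closer in spirit to the budget-splitting strategies reused later in the parity reduction.
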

\begin{proof}
Let $n=|V|-1$ and $t \in V$ be \PO's target. Suppose the game starts from a vertex $v$, and let $\epsilon>0$ be the initial budget of \PO. Since there is a path from $v$ to \PO's target, there is a path of length at most $n$. Thus, if \PO wins $n$ consecutive biddings, he wins the game. Intuitively, \PO carefully chooses $n$ increasing bids such that if \PT wins one of these bids, \PO's ratio increases by a constant over his initial budget. By repeatedly playing according to such a strategy, \PO guarantees that his ratio increases and will eventually allow him to win $n$ biddings in a row. Formally, if $\tau=0$, then $\G$ is a Richman game and the proof of the lemma can be found in \cite{AHC17}. Otherwise, pick a sufficiently large $r\in \Nat$ such that $\tau>\frac{2}{r-1}$ and $r\geq 3$. Fix $0<m<\frac{\epsilon}{r^n}$. \PO proceeds as follows: after winning $i$ times, for $0\leq i$, he bids $m \cdot r^i$ and, upon winning the bidding, he moves towards $t$ along any shortest path. Since $m+mr+\dots+mr^{n-1}<mr^n<\epsilon$, \PO has sufficient budget to win $n$ consecutive biddings. If \PT does not win any of the first $n$ biddings, \PO wins the game. On the other hand, if \PT wins the $k$-th bidding with $1\leq k \leq n$, we show in Appendix~\ref{app:SCC-qual} that his ratio increases by a fixed amount $b=\frac{mr}{(1-\epsilon)(r-1)}>0$.
\end{proof}

Lemma~\ref{lem:SCC-qual} gives rise to simple reduction from parity taxman games to taxman reachability games. 

\begin{theorem}
\label{thm:parity}
Parity taxman games are linearly reducible to taxman reachability games. Specifically, threshold ratios exist in parity taxman games.
\end{theorem}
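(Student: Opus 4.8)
The plan is to reduce parity taxman games to reachability taxman games using \cref{lem:SCC-qual} as the key engine, following the standard SCC-decomposition strategy that was used for Richman and poorman bidding in \cite{AHC17,AHI18}. First I would observe that the reduction should proceed by induction on the structure of the game, using the bottom strongly connected components (BSCCs) of the game graph as the base case. The central idea is that within a single SCC, \cref{lem:SCC-qual} tells us that a player can force the token to reach \emph{any} target vertex in that SCC with an arbitrarily small positive budget, provided the target is reachable inside the SCC. This means that inside a winning BSCC, the player controlling the highest even (for \PO) priority can essentially ``steer'' the play at will, spending a vanishing fraction of budget.

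The key steps, in order, are as follows. First I would identify, within each BSCC, the maximal priority that appears. If this priority is odd, then the BSCC is losing for \PO from the perspective of the parity objective (once trapped there, the maximal priority seen infinitely often is odd); if even, it is winning. For a winning BSCC, I would apply \cref{lem:SCC-qual} with target equal to the vertex of maximal even priority: \PO can repeatedly force a visit to that vertex using negligible budget each time, and by spacing out these visits he guarantees the maximal priority seen infinitely often is the even one, all while keeping a positive budget forever. For a losing BSCC, a symmetric argument gives \PT the win. The second step is to handle the non-bottom part of the graph: I would contract each BSCC to a single vertex labeled ``winning'' or ``losing,'' thereby producing a reachability game in which \PO's target is the set of winning BSCCs. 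The threshold ratio in the original parity game at vertex $v$ then equals the threshold ratio of this derived reachability game, whose existence is guaranteed by \cref{thm:reach}.

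The main obstacle I expect is the interaction between the budget spent inside an SCC and the budget needed to reach it, i.e.\ making the reduction genuinely composable rather than just correct at a single vertex. Specifically, \cref{lem:SCC-qual} guarantees winning with \emph{any} positive budget, but to chain the reachability phase with the in-SCC steering phase, I must ensure that after \PO forces the token into a winning BSCC (which may consume most of his budget during the reachability contest) he still has a strictly positive budget left to run the steering strategy inside the SCC. This is resolved precisely because \cref{lem:SCC-qual} requires only a positive, not a bounded-below, budget: any leftover $\epsilon > 0$ after the reachability phase suffices. I would make this explicit by arguing that \PO's combined strategy plays the reachability strategy in the contracted game until entering a winning BSCC, at which point he switches to the strategy of \cref{lem:SCC-qual} with whatever positive budget remains. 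The ``linear'' claim in the statement then follows from noting that the reduction itself---SCC decomposition, priority inspection, and contraction---is computable in time linear in the size of the game graph, and the resulting reachability game has size linear in the original, so threshold ratios transfer directly.
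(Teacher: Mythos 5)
Your proof follows essentially the same route as the paper's: decompose into BSCCs, use Lemma~\ref{lem:SCC-qual} plus an infinite splitting of the remaining positive budget to force infinitely many visits to the maximal-priority vertex inside a winning BSCC, and then solve the induced reachability game whose targets are the winning BSCCs. The only discrepancy is cosmetic: you label BSCCs with maximal \emph{even} priority as winning for \PO, whereas this paper's parity objective declares \PO the winner when the maximal index seen infinitely often is \emph{odd}.
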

\begin{proof}
A {\em bottom strongly-connected component} (BSCC, for short) in $\G$ is a maximal subset of vertices such that every two vertices have a path between them and no edges leave the set. Lemma~\ref{lem:SCC-qual} ensures that when the game is in a BSCC, with any positive initial budget, a player can force the game to reach any other vertex. A strategy that ensures infinitely many visits to a vertex $t$ splits a player's budget into infinitely many positive parts and uses the $i$-th part to force the game to visit $t$ for the $i$-th time. Thus, a BSCC in which the highest parity index is odd is ``winning'' for \PO and these in which the highest parity index is odd are ``losing'' for \PO. We then construct a reachability game by removing the BSCCs of the game and playing a reachability game on the rest of the game, where \PO's targets are his winning BSCCs.
\end{proof}

\section{Mean-Payoff Taxman Games}
This section consists of our main technical contribution. We start by showing a complete classification of the value in strongly-connected mean-payoff taxman games depending on the taxman parameter $\tau$ and the initial ratio. We then extend the solution to general games, where the solution to strongly-connected games constitutes the main ingredient in the solution of the general case.

\subsection{Strongly-Connected Mean-Payoff Taxman Games}
We start by formally defining the value of a strongly-connected mean-payoff game. Lemma~\ref{lem:SCC-qual} implies that in a strongly-connected game, a player can draw the game from every vertex to any other vertex with any positive initial budget. Since mean-payoff objectives are prefix independent, it follows that the vertex from which the game starts does not matter. Indeed, if the game starts at a vertex $v$ with \Max having initial ratio $r+ \epsilon$, then \Max can use $\epsilon/2$ of his budget to draw the game to a vertex $u$ and continue as if he starts the game with initial ratio $r+\epsilon/2$.

\begin{definition}
{\bf (Mean-payoff value)} Consider a strongly-connected mean-payoff game $\G = \zug{V, E, w}$ and a ratio $r \in (0,1)$ and a taxman parameter $\tau \in [0,1]$. The mean-payoff value of $\G$ w.r.t. $r$ and $\tau$, is a value $c \in \Real$ such that for every $\epsilon >0$
\begin{itemize}
\item if \Min's initial ratio is greater than $(1-r)$, then he has a strategy that guarantees that the payoff is at most $c+\epsilon$, and
\item if \Max's initial ratio is greater than $r$, then he has a strategy that guarantees that the payoff is greater than $c-\epsilon$. 
\end{itemize}
\end{definition}

The following theorem, which we prove in the next two sections, summarizes the properties of mean-payoff taxman games. 

\begin{theorem}\label{thm:SCC-MP}
Consider a strongly-connected mean-payoff taxman game $\G$ with taxman parameter $\tau \in [0,1]$ and an initial ratio $r \in (0,1)$. The value of $\G$ w.r.t. $\tau$ and $r$ equals the value of the random-turn based game $\RTB^{F(\tau,r)}(\G)$ in which \Max is chosen to move with probability $F(\tau,r)$ and \Min with probability $1-F(\tau, r)$, where $F(\tau, r) = \frac{r+\tau(1-r)}{1+\tau}$. 
\end{theorem}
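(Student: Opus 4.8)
The plan is to establish the two inequalities separately: that \Max can guarantee payoff at least $\MP(\RTB^{F(\tau,r)}(\G)) - \epsilon$ and that \Min can guarantee payoff at most $\MP(\RTB^{F(\tau,r)}(\G)) + \epsilon$, which together pin the value of $\G$ to the value of the random-turn based game. I would first set up the probabilistic benchmark: let $c = \MP(\RTB^{F(\tau,r)}(\G))$ and fix the optimal positional strategies in the stochastic game, extracting from them a \emph{potential function} $\Pot : V \to \Real$ satisfying the mean-payoff optimality (Poisson) equations, namely $\Pot(v) = w(v) - c + F(\tau,r)\cdot \max_{u \in N(v)} \Pot(u) + (1-F(\tau,r))\cdot \min_{u \in N(v)} \Pot(u)$. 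From this potential I would define the \emph{strength} $\St(v) = \Pot(v^+) - \Pot(v^-)$, where $v^+$ and $v^-$ are the neighbors maximizing and minimizing $\Pot$; the strength quantifies how much is at stake at $v$, and it is what \Max's bids must track. Since the two bounds are symmetric, I expect only the \Max side to require the delicate construction, with the \Min side following by an analogous (and simpler) argument.

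The core of the argument is the bidding strategy for \Max, and here the guiding idea is to maintain an invariant tying \Max's budget ratio to the accumulated potential: the right quantity to control is something like $G_k = $ (\Max's ratio at step $k$) $+ \delta \cdot \Pot(v_k)$ for a suitably small scaling $\delta$, so that whenever the potential drops (the token moves to a worse vertex for \Max) his ratio must have risen by a compensating amount, and vice versa. At each vertex $v$ I would have \Max bid a carefully normalized multiple of $\St(v)$; the taxman update rule, in which a winning bid of $b$ changes the budgets by $-b$ and $+(1-\tau)b$ on the two sides, means the change in \Max's ratio after a bid depends on $b$, on the current total budget, and on $\tau$. The crucial computation is to choose the normalization constant so that, under the taxman payment rule, the ratio change exactly matches $\delta \cdot \St(v)$ regardless of who wins the bidding — this is precisely where the formula $F(\tau,r) = \frac{r + \tau(1-r)}{1+\tau}$ must emerge, since it is the bias for which the Richman-style averaging ($\tau=1$ gives the unbiased $\tfrac12$) and the poorman-style skew ($\tau=0$ gives $r$) interpolate correctly. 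I would verify the invariant is preserved in both outcomes (\Max wins and moves to $v^+$, or \Min wins and \Max's ratio grows by $(1-\tau)b$ worth), and then, because the potential $\Pot$ is bounded on the finite graph, the boundedness of $G_k$ forces the long-run average of $w(v_k) - c$ to be at least $-\epsilon$, yielding $\MP \geq c - \epsilon$.

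The main obstacle, and the heart of the paper's claimed technical contribution, is getting the normalization exactly right: the taxman payment is not symmetric (only portion $\tau$ goes to the opponent, portion $1-\tau$ is burned), so the ratio dynamics are genuinely nonlinear in the budgets, and a naive bid proportional to $\St(v)$ will not preserve a clean linear invariant. I anticipate that the fix is to normalize the bid by the current total budget (which itself shrinks under poorman-type payments when $\tau < 1$) and to track the invariant in terms of ratios rather than absolute budgets, absorbing the shrinkage into the update. The delicate point is showing that the ratio can never be driven to $0$ or below the threshold needed to sustain the bids — i.e.\ that \Max never runs out of budget — which requires the invariant to guarantee that a sequence of losses by \Max is always accompanied by a matching accumulation of favorable weight, so that his effective position never deteriorates. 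Once the invariant and the no-bankruptcy guarantee are in place, the conversion to a mean-payoff bound via boundedness of $\Pot$ is routine, and the symmetric argument gives \Min's bound, completing the proof that the value equals $\MP(\RTB^{F(\tau,r)}(\G))$.
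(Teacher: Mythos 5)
Your overall architecture matches the paper's: reduce to the case $\MP(\RTB^{F(\tau,r)}(\G))=0$, extract potentials and strengths from optimal positional strategies in the random-turn game, have \Max bid a normalized multiple of $\St(v)$, and convert boundedness of an invariant into the mean-payoff bound, with \Min's side obtained by symmetry. The gap is in the one place you yourself flag as ``the crucial computation'': you propose to conserve an exact linear invariant $G_k = \rho_k + \delta\cdot\Pot(v_k)$ with a \emph{fixed} scaling $\delta$, choosing the bid so that the ratio change ``exactly matches $\delta\cdot\St(v)$ regardless of who wins.'' For $\tau<1$ this cannot be done. In ratio terms the updates are M\"obius maps: if \Max's ratio is $\rho$ and the winning bid is $b$ (as a fraction of the total budget), his ratio becomes $\frac{\rho-b}{1-(1-\tau)b}$ on winning and $\frac{\rho+\tau b}{1-(1-\tau)b}$ on losing. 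To first order in $b$ the loss and the gain are $b\,(1-(1-\tau)\rho)$ and $b\,(\tau+(1-\tau)\rho)$, whose relative sizes at $\rho=r$ are exactly what produces $F(\tau,r)$ --- so your first-order picture is right --- but the higher-order terms work against \Max and accumulate over infinitely many biddings. No fixed $\delta$ makes the invariant exact, and an ``approximately conserved'' invariant with uncontrolled error proves nothing about the $\liminf$.

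The paper's Lemma~\ref{lem:ratios} is precisely the device that closes this gap, and nothing in your sketch substitutes for it. Three ingredients are needed. First, \Max plays for a slightly unfavorable bias: one fixes $K$ strictly above the critical value $\frac{\tau r^2+r(1-r)}{\tau(1-r)^2+r(1-r)}$, so the bookkeeping walk tracks $\RTB^{r/(r+(1-r)K)}(\G)$ rather than $\RTB^{F(\tau,r)}(\G)$, and one invokes continuity of $p\mapsto\MP(\RTB^p(\G))$ to lose only $\epsilon$; this slack is what absorbs the second-order losses. Second, the bid normalization is not constant but decays geometrically along the walk, $\beta_x=\beta\gamma^{x-1}$, and the target ratios $r_x=\gamma^{x-1}+(1-\gamma^{x-1})r$ stay strictly above $r$ and only approach it in the limit; the invariant maintained is the \emph{inequality} that \Max's actual ratio dominates $r_x$ at the current walk position, not an equality, and the strategy is history-dependent through $x$ rather than stationary in the current ratio. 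Third, the no-bankruptcy guarantee (your ``delicate point'') is exactly Point~\ref{win} of that lemma --- \Min cannot push the walk below position $1$ --- and establishing it, together with the existence of suitable $\beta,\gamma\in(0,1)$, requires the specific geometric form of $r_x$ and $\beta_x$ and a Taylor-expansion comparison of the two constraints on $\gamma$. Without these, the strategy you describe is not well defined and the invariant you propose fails after finitely many rounds.
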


We show that in order to prove Theorem~\ref{thm:SCC-MP}, it suffices to prove the following intermediate lemma.

\begin{lemma}
\label{lem:intermediate}
Consider a strongly-connected mean-payoff taxman game $\G$, a taxman parameter $\tau$, and an initial ratio $r \in (0,1)$ such that $\MP(\RTB^{F(\tau,r)}) = 0$ for $F(\tau, r) = \frac{r+\tau(1-r)}{1+\tau}$. Then, for every $\epsilon>0$ \Max has a strategy that guarantees that no matter how \Min plays, the payoff is greater than $-\epsilon$.
\end{lemma}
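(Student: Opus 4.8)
The plan is to construct an explicit bidding strategy for \Max that exploits the probabilistic connection to the random-turn based game $\RTB^{F(\tau,r)}(\G)$. Since $\MP(\RTB^{F(\tau,r)}(\G)) = 0$, the stochastic mean-payoff game admits an optimal positional strategy for \Max together with a collection of \emph{potential} (or value) functions $\Pot : V \to \Real$ satisfying a local inequality at each vertex: writing $p = F(\tau,r)$, the expected one-step change $p \cdot \Pot(u^*) + (1-p)\cdot \Pot(u') + w(v) \geq 0$ holds, where $u^*$ is the \Max-optimal successor and $u'$ ranges over \Min's choices. I would first extract such a potential function and then use the \emph{strength} $\St(v)$ of each vertex---essentially the spread $\Pot(v^+) - \Pot(v^-)$ between the best and worst neighbors---as the backbone of \Max's bidding.

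The core of the construction is the normalization step: when the token is at $v$ and \Max's current ratio is some $R$, \Max bids a carefully scaled multiple of $\St(v)$ chosen so that a single bidding round transforms the pair (accumulated energy, ratio) in a way that keeps a suitable \emph{invariant} non-negative. Concretely, I would track a quantity of the form $E(\text{prefix}) + \gamma \cdot (\text{function of \Max's ratio})$ for an appropriate constant $\gamma$, and design the bid so that: (i) if \Max wins the bidding he moves to $u^*$, paying his bid under the taxman rule (budget updates $B_1 - b$ and $B_2 + (1-\tau)b$); and (ii) if \Min wins she pays \Max portion $\tau$ of her bid, increasing \Max's ratio. The factor $F(\tau,r) = \frac{r + \tau(1-r)}{1+\tau}$ arises precisely because it balances these two asymmetric budget transfers, so that whichever player wins, the change in \Max's normalized ratio is matched against the change in potential dictated by the $\RTB^{F(\tau,r)}$ inequality. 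This is what makes the invariant a supermartingale-like (in fact deterministic) lower bound.

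From the invariant I would derive the payoff guarantee. The invariant should read, roughly, that after any prefix of length $n$ the accumulated energy $E(\eta^n)$ is bounded below by a term that is linear in $n$ with slope $0$ (the value), minus a bounded error coming from the total range of $\Pot$ and the fact that \Max's ratio stays bounded away from $0$. Dividing by $n$ and taking $\liminf$ then yields $\MP(\eta) \geq -\epsilon$ once the parameters are tuned, giving exactly the conclusion of Lemma~\ref{lem:intermediate}. A subtle point I would have to handle is that \Max must never run out of budget: I would argue that his ratio is kept above a positive threshold throughout, using the fact that losing a bidding strictly increases his ratio (by the $\tau$-portion received) while winning only decreases it by the controlled bid amount, so the ratio performs a bounded excursion.

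The main obstacle, as the authors themselves flag, is getting the normalization exactly right. The difficulty is that the budget update is multiplicative in the ratio but the potential inequality is additive in the energy, so the bid as a function of $\St(v)$ and the current ratio must be chosen to linearize the interaction between these two scales. I expect the cleanest route is to first guess the invariant as an affine combination of energy and a \emph{logarithmic} or \emph{rational} transform of the ratio (the transform being precisely what converts the taxman multiplicative update into an additive one), then reverse-engineer the bid and verify the one-step inequality in the two cases (\Max wins / \Min wins) separately, checking that the coefficient $F(\tau,r)$ emerges. Secondary technical care will be needed at the boundary: ensuring the prescribed bid is legal (does not exceed the available budget) and handling the degenerate cases $\tau = 1$ (Richman) and $\tau = 0$ (poorman) as consistency checks against the known results.
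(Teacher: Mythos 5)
Your plan follows the same route as the paper: extract potentials from $\RTB^{F(\tau,r)}(\G)$, define the strength of a vertex as the (scaled) spread $\Pot(v^+)-\Pot(v^-)$, bid a normalized multiple of the strength, and maintain an invariant coupling accumulated energy with a transformed ratio. Your guess that the transform is logarithmic/rational is also on target: the paper parameterizes positions $x\in\Real_{\geq 1}$ by $r_x=\gamma^{x-1}+(1-\gamma^{x-1})r$ and bids $r(1-r)\beta\gamma^{x-1}\St(v)$, so the ratio-to-position map is indeed a logarithm. However, the proposal stops exactly where the work begins, and it misses one point that is not merely ``tuning'': the invariant \emph{cannot} be run at the exact bias $F(\tau,r)$. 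In the paper's Lemma~\ref{lem:ratios} the two one-step inequalities (the \Max-wins case and the \Min-wins case) are made compatible by a Taylor expansion in the bid-scaling parameter $\beta$, and the first-order coefficients are strictly ordered only when the step-size ratio $K$ is \emph{strictly} greater than the critical value $\frac{\tau r^2+r(1-r)}{\tau(1-r)^2+r(1-r)}$; at the critical value the coefficients coincide and no positive $\beta$ works. Consequently the strategy actually certifies a payoff of $\MP(\RTB^{r/(r+(1-r)K)}(\G))$ at a bias strictly below $F(\tau,r)$, and one must invoke continuity of $p\mapsto\MP(\RTB^{p}(\G))$ to choose $K$ so that this value exceeds $-\epsilon$. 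This is where the $\epsilon$ in the statement is genuinely consumed; a scheme that tries to match the exact bias $F(\tau,r)$, as your invariant implicitly does, will fail.

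The second gap is the solvency argument. Your heuristic that ``losing increases the ratio, winning decreases it by a controlled amount, so the ratio performs a bounded excursion'' does not by itself prevent \Max from being bankrupted by a long run of wins, nor does it explain why \Min cannot simply concede forever while the energy drifts down. Both are handled in the paper by specific structural features of the normalization: the bids decay geometrically in the walk position (so the total expenditure over any sequence of wins is summable and bounded by the initial surplus over $r$), and near the lower boundary of the walk ($1\leq x<1+rs$) \Max's prescribed bid exceeds $1-r_x$, i.e.\ \Min's entire remaining budget, so \Min cannot push the walk below position $1$ (Point~\ref{win} of Lemma~\ref{lem:ratios}). That lower barrier is what converts the walk identity $x=x_0-\nu G(\pi)+\mu I(\pi)$ together with Lemma~\ref{lem:magic} into the linear-in-$n$ lower bound on $E(\pi)$. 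Without these two quantitative facts the invariant you describe is not established, so the proposal as written is an accurate roadmap of the paper's argument but not yet a proof.
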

\begin{proof}[Proof that Lemma~\ref{lem:intermediate} implies Theorem~\ref{thm:SCC-MP}]
First, we may assume that $\MP(\RTB^{F(\tau,r)}) = 0$ since we can decrease all weights by $\MP(\RTB^{F(\tau,r)})$. Recall that the definition of the payoff of an infinite play $\pi = v_1,v_2,\ldots$ is $\lim \inf_{n\to \infty} \frac{1}{n} \sum_{i=1}^n w(v_i)$. Note that since the definition uses $\lim\inf$, it gives \Min an advantage. Constructing a strategy for \Max is thus more challenging and it implies a strategy for \Min as follows. Let $\G'$ be a mean-payoff game that is obtained from $\G$ by multiplying all the weights by $-1$, and associate \Min in $\G$ with \Max in $\G'$ and vice-versa. Observe that $\MP(\RTB^{1-\frac{r+\tau(1-r)}{1+\tau}}(\mathcal{G'}))=-\MP(\RTB^{\frac{r+\tau(1-r)}{1+\tau}}(\mathcal{G}))=0$. Thus, using a strategy for \Max in $\G'$ that guarantees  a payoff that is greater than $-\epsilon$ can be used by \Min to guarantee a payoff in $\G$ that is smaller than $\epsilon$.
\end{proof}

\subsection{The importance of moving} 
The first part of the construction of an optimal strategy for \Max as in Lemma~\ref{lem:intermediate} is to assign, to each vertex $v \in V$, a {\em strength}, denoted $\St(v)$, where $\St(v) \in \Qpos$. Intuitively, if $\St(v) > \St(u)$, for $u,v \in V$, it is more important for \Max to move in $v$ than it is in $u$. We follow the construction in \cite{AHI18}, which uses the concept of {\em potentials}, which is a well-known concept in stochastic game (see \cite{Put05}) and was originally defined in the context of the strategy iteration algorithm \cite{How60}. For completeness, we present the definitions below.

Consider a strongly-connected mean-payoff game $\G$, and let $p \in [0,1]$. Let $f$ and $g$ be two optimal positional strategies in $\RTB^{p}(\G)$, for \Min and \Max, respectively. For a vertex $v \in V$, let $v^-,v^+ \in V$ be such that \Max proceeds from $v$ to $v^+$ according to $g$ and \Min proceeds from $v$ to $v^-$ according to $f$. It is not hard to see that the mean-payoff value in all vertices in $\RTB^p(\G)$ is the same and we denote it by $\MP(\RTB^p(\G))$. We denote the  potential of $v$ by $\Pot^p(v)$ and the strength of $v$ by $\St^p(v)$, and we define them as follows.
\begin{equation*}
\begin{split}
&\Pot^p(v) = p \cdot \Pot^p(v^+) + (1-p) \cdot \Pot^p(v^-) + w(v) - \MP(\RTB^p(\G))  \text{ and } \\
&\St^p(v) = p\cdot (1-p) \cdot \big(\Pot^p(v^+) - \Pot^p(v^-)\big)
\end{split}
\end{equation*}
There are optimal strategies for which $\Pot^p(v^-) \leq \Pot^p(v') \leq \Pot^p(v^+)$, for every $v' \in N(v)$, which can be found, for example, using the strategy iteration algorithm. Note that $\St(v) \geq 0$, for every $v \in V$. 

Consider a finite path $\pi = v_1,\ldots, v_n$ in $\G$. We intuitively think of $\pi$ as a play, where for every $1 \leq i < n$, the bid of \Max in $v_i$ is $\St(v_i)$ and he moves to $v_i^+$ upon winning. Thus, if $v_{i+1} = v_i^+$, we say that \Max won in $v_i$, and if $v_{i+1} \neq v_i^+$, we say that \Max lost in $v_i$. Let $W(\pi)$ and $L(\pi)$ respectively be the indices in which \Max wins and loses in $\pi$. We call \Max wins {\em investments} and \Max loses {\em gains}, where intuitively he {\em invests} in increasing the energy and {\em gains} a higher ratio of the budget whenever the energy decreases. Let $G(\pi)$ and $I(\pi)$ be the sum of gains and investments in $\pi$, respectively, thus $G(\pi) = \sum_{i \in L(\pi)} \St(v_i)$ and $I(\pi) =  \sum_{i \in W(\pi)} \St(v_i)$. Recall that the energy of $\pi$ is $E(\pi) = \sum_{1 \leq i <n} w(v_i)$. The following lemma, which generalizes a similar lemma in \cite{AHI18}, connects the strength with the change in energy. 

\begin{lemma}
\label{lem:magic}
Consider a strongly-connected game $\G$, and $p \in [0,1]$. For every finite path $\pi=v_1,\ldots,v_n$ in $\G$, we have $\Pot^p(v_1) - \Pot^p(v_n) + (n-1)\cdot\MP(\RTB^p(\G)) \leq E(\pi) + G(\pi)/(1-p) - I(\pi)/p$. In particular, when $p = \nu/(\mu+\nu)$ for $\nu,\mu > 0$, there is a constant $P = \min_v \Pot^p(v) - \max_v \Pot^p(v)$ such that $\frac{\nu\cdot \mu}{\nu+\mu}\cdot \big(E(\pi) -P-(n-1)\cdot \MP(\RTB^{\frac{\nu}{\mu+\nu}}(\G))\big) \geq \mu\cdot I(\pi)-\nu\cdot G(\pi)$. 
\end{lemma}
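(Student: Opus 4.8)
The plan is to prove the first, master inequality by a per-step telescoping argument, and then to extract the ``in particular'' statement from it by a routine substitution. Write $m = \MP(\RTB^p(\G))$ and recall that along $\pi$ the token moves from $v_i$ to $v_{i+1}$, with $i \in W(\pi)$ exactly when $v_{i+1} = v_i^+$ (\Max wins and plays his optimal successor) and $i \in L(\pi)$ otherwise. My goal is to bound each single-step difference $\Pot^p(v_i) - \Pot^p(v_{i+1})$ from above by a quantity built only from $w(v_i)$, $m$, and $\St(v_i)$, and then to sum over $1 \le i < n$ so that the left-hand side telescopes to $\Pot^p(v_1) - \Pot^p(v_n)$.

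The computation rests on the defining recurrence $\Pot^p(v_i) = p\,\Pot^p(v_i^+) + (1-p)\,\Pot^p(v_i^-) + w(v_i) - m$ together with $\St(v_i) = p(1-p)\big(\Pot^p(v_i^+) - \Pot^p(v_i^-)\big)$. For a winning step ($v_{i+1} = v_i^+$) I substitute $\Pot^p(v_i^+)$ directly and obtain the exact identity $\Pot^p(v_i) - \Pot^p(v_{i+1}) = w(v_i) - m - \St(v_i)/p$. For a losing step I invoke the optimality property $\Pot^p(v_i^-) \le \Pot^p(v')$ for all $v' \in N(v_i)$; since $v_{i+1}$ is a neighbor of $v_i$, replacing $\Pot^p(v_{i+1})$ by the smaller $\Pot^p(v_i^-)$ can only increase the difference, yielding $\Pot^p(v_i) - \Pot^p(v_{i+1}) \le w(v_i) - m + \St(v_i)/(1-p)$. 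Summing the equalities over $W(\pi)$ and the inequalities over $L(\pi)$, and recognizing $\sum w(v_i) = E(\pi)$, $\sum_{W(\pi)} \St(v_i) = I(\pi)$, $\sum_{L(\pi)} \St(v_i) = G(\pi)$, telescopes the left side and gives precisely $\Pot^p(v_1) - \Pot^p(v_n) + (n-1)m \le E(\pi) + G(\pi)/(1-p) - I(\pi)/p$.

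For the second statement I set $p = \nu/(\mu+\nu)$, so $1-p = \mu/(\mu+\nu)$, and I bound the left side of the master inequality from below using $\Pot^p(v_1) - \Pot^p(v_n) \ge \min_v \Pot^p(v) - \max_v \Pot^p(v) = P$. This gives $P + (n-1)m \le E(\pi) + G(\pi)/(1-p) - I(\pi)/p$, which I rearrange to $I(\pi)/p - G(\pi)/(1-p) \le E(\pi) - P - (n-1)m$. Substituting the values of $p$ and $1-p$ and multiplying through by the positive constant $\frac{\nu\mu}{\mu+\nu}$ collapses $I(\pi)/p$ into $\mu\,I(\pi)$ and $G(\pi)/(1-p)$ into $\nu\,G(\pi)$, producing exactly $\frac{\nu\mu}{\nu+\mu}\big(E(\pi) - P - (n-1)\MP(\RTB^{\nu/(\mu+\nu)}(\G))\big) \ge \mu\,I(\pi) - \nu\,G(\pi)$.

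I expect the only genuinely delicate point to be the losing-step bound: it is essential that the successor $v_{i+1}$, though chosen adversarially by \Min, is still a neighbor of $v_i$ and hence has potential at least $\Pot^p(v_i^-)$, which is what converts an otherwise uncontrolled difference into the clean $+\St(v_i)/(1-p)$ term. I also need $p \in (0,1)$ for the divisions by $p$ and $1-p$ to make sense; this is automatic in the ``in particular'' case since $\nu,\mu > 0$, while at the degenerate endpoints $p \in \{0,1\}$ all strengths vanish and the inequality holds trivially.
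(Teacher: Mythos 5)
Your proof is correct and is essentially the paper's argument: the paper formalizes the same per-step estimates (the exact identity for a winning step and the bound $\Pot^p(v_{i+1})\geq\Pot^p(v_i^-)$ for a losing step) as an induction on the path length, which is just your telescoping sum unrolled, and the derivation of the ``in particular'' part is identical.
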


\begin{proof}
The proof is by induction on the length of $\pi$. For $n=1$, the claim is trivial since both sides of the equation are $0$. Suppose the claim is true for all paths of length $n-1$ and we prove it for a path $\pi=v_1,\ldots,v_{n+1}$ of length\footnote{The weight of the last vertex does not participate in the energy calculation, thus the length of a path that traverses $n+1$ vertices has length $n$.} $n$. We consider the case when Max wins in $v_1$ thus $v_2=v_1^+$. The case when Min wins in $v_1$ is proved similarly. Let $\pi'$ be the part of path $\pi$ starting in $v_2$. Since Max wins the first bidding, we have $G(\pi')=G(\pi)$, $I(\pi')=I(\pi)+\St^p(v)$. Hence, by induction hypothesis we have
\begin{equation*}
    \begin{split}
        E(\pi)&+G(\pi)/(1-p)-I(\pi)/p \geq E(\pi')+G(\pi')/(1-p)-I(\pi')/p+w(v_1)-\St^p(v_1)/p \\ 
        &\geq \Pot^p(v_1^+)-\Pot^p(v_{n+1})+(n-1)\cdot\MP(\RTB^p(\G))+w(v_1)-\St^p(v_1)/p \\
        &= \Pot^p(v_1^+)-\Pot^p(v_{n+1})+(n-1)\cdot\MP(\RTB^p(\G))+w(v_1)-(1-p)\cdot (\Pot^p(v_1^+)-\Pot^p(v_1^-)) \\
        &= p\cdot\Pot^p(v_1^+)+(1-p)\cdot\Pot^p(v_1^-)+w(v_1)-\Pot^p(v_{n+1})+(n-1)\cdot\MP(\RTB^p(\G)) \\
        &= \Pot^p(v_1)-\Pot^p(v_{n+1})+n\cdot\MP(\RTB^p(\G)).
    \end{split}
\end{equation*}
\end{proof}

\subsection{Normalizing the bids}
Once we have figured out how important each vertex is, the second challenge in the construction of \Max's strategy is to wisely use his budget such that the changes in the ratios between the players' budgets coincides with the changes in the accumulated energy. Intuitively, Lemma~\ref{lem:ratios} below gives us a recipe to normalize the bids: whenever we reach a vertex $v$, \Max bids $r\cdot (1-r) \cdot \St(v) \cdot \beta_x$, where $\beta_x$ is the normalization factor and $x \in \Real_{\geq 1}$ ties between changes in energy and changes in \Max's ratio, as elaborated after the lemma.

\begin{lemma}
\label{lem:ratios}
Consider a game $\G$, a finite set of non-negative strengths $S \subseteq \Real_{\geq 0}$, a ratio $r \in (0,1)$, and a taxman parameter $\tau \in [0,1]$. For every $K > \frac{\tau r^2+r(1-r)}{\tau (1-r)^2+r(1-r)}$ there exist sequences $(r_x)_{x \geq 1}$ and $(\beta_x)_{x\geq 1}$ with the following properties.
\begin{enumerate}
\item\label{legal} \Max's bid does not exceed his budget, thus, for each position $x \in \Real_{\geq 1}$ and strength $s \in S$, we have $\beta_x \cdot s\cdot r\cdot (r-1) < r_x$.
\item\label{win} \Min cannot force the game beyond position $1$, thus for every $s \in S\backslash \{0\}$ and $1\leq x< 1+rs$, we have $\beta_x \cdot s\cdot r\cdot (r-1) > 1-r_x$.
\item\label{decrease} The ratios tend to $r$ from above, thus for every $x \in \Real_{\geq 1}$, we have $r_x \geq r$, and $\lim_{x \to \infty} r_x = r$.
\item\label{invariant} No matter who wins a bidding, \Max's ratio can only improve. Thus, in case of winning and in case of losing, we respectively have 
\[\frac{r_x - \beta_x \cdot s\cdot r\cdot (r-1)}{1-(1-\tau) \cdot \beta_x \cdot s\cdot r\cdot (r-1)} \geq r_{x+(1-r)\cdot K\cdot s} \text{ and } \frac{r_x + \tau\cdot \beta_x \cdot s\cdot r\cdot (r-1)}{1-(1-\tau) \cdot \beta_x \cdot s\cdot r\cdot (r-1)} \geq r_{x-s\cdot r}\]
\end{enumerate}
\end{lemma}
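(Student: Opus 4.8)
The plan is to treat $(r_x)$ and $(\beta_x)$ as functions on the ray $[1,\infty)$ and to read the four requirements as constraints coupling the ratio schedule $r_x$ with the bid schedule $\beta_x$, where at position $x$ and strength $s$ the (positive) bid is $b := \beta_x\,s\,r(1-r)$. Writing the two budget updates of property~\ref{invariant} as the M\"obius maps $r_x \mapsto \frac{r_x-b}{1-(1-\tau)b}$ (on winning) and $r_x \mapsto \frac{r_x+\tau b}{1-(1-\tau)b}$ (on losing), I first record what each property asks for: property~\ref{decrease} fixes the shape ($r_x$ nonincreasing with $\lim_{x\to\infty} r_x = r$); properties~\ref{legal} and~\ref{win} sandwich $\beta_x$ between the legality bound $b<r_x$ and the ``\Max wins near the start'' bound $b>1-r_x$ on $[1,1+rs)$; and property~\ref{invariant} controls the rate at which $r_x$ may fall.

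The technical heart is property~\ref{invariant}. I would first carry out the infinitesimal analysis: expanding both inequalities to first order in $s$ (equivalently, in the position increments $(1-r)Ks$ and $sr$) converts them into a two-sided bound on the derivative,
\[
-\beta_x(1-r)\bigl(\tau+(1-\tau)r_x\bigr)\ \le\ r'_x\ \le\ -\frac{\beta_x r}{K}\bigl(1-(1-\tau)r_x\bigr),
\]
the lower bound coming from the losing case (do not shed the ratio too fast) and the upper bound from the winning case (shed it fast enough). This window is nonempty exactly when $K \ge \frac{r(1-(1-\tau)r_x)}{(1-r)(\tau+(1-\tau)r_x)}$; the right-hand side is decreasing in $r_x$ on $[r,1)$, so its supremum over the admissible range is attained at $r_x=r$ and equals $\frac{\tau r^2+r(1-r)}{\tau(1-r)^2+r(1-r)}$. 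This is precisely the threshold in the hypothesis, and the strict inequality $K>\frac{\tau r^2+r(1-r)}{\tau(1-r)^2+r(1-r)}$ is exactly what makes the derivative window have positive length for every $r_x\in[r,1)$, leaving slack to satisfy properties~\ref{legal} and~\ref{win} as well.

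With feasibility understood I would construct the sequences. Choose $r_1$ close enough to $1$ (how close depends only on the finite set $S$, via $r_1>\frac{\max S}{\min(S\setminus\set{0})+\max S}$) so that near $x=1$ the legality ceiling and the winning floor on $\beta_x$ are simultaneously satisfiable; then take $\beta_x$ to be a positive, eventually vanishing schedule lying in the prop-\ref{legal}/prop-\ref{win} band, and define $r_x$ by integrating a rate $r'_x$ chosen strictly inside the window above (for instance a fixed convex combination of its two endpoints). Calibrating the overall scale of $\beta_x$ so that $\int_1^\infty |r'_x|\,dx = r_1-r$ forces $r_x\downarrow r$ and yields property~\ref{decrease}. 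It remains to upgrade the infinitesimal inequalities to the genuine finite-jump inequalities of property~\ref{invariant} for each $s\in S$, using monotonicity of $r_x$ together with the uniform slack supplied by $K>K_{\mathrm{th}}$, and to verify properties~\ref{legal} and~\ref{win} directly on the relevant position ranges.

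I expect the main obstacle to be this last step. The inequalities of property~\ref{invariant} must hold for the actual finite strengths $s\in S$, not merely infinitesimally, while the single schedule $\beta_x$ must thread the prop-\ref{legal} ceiling and the prop-\ref{win} floor and its magnitude is simultaneously what drives the decrease rate of $r_x$. Decoupling these demands, and showing that the strict margin in $K$ is uniform enough to absorb the finite jumps, to accommodate the squeeze on $\beta_x$ across the whole ray, and still to land the limit exactly at $r$, is the delicate part; the remainder is bookkeeping with the two M\"obius maps.
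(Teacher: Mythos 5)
Your first-order analysis is sound and matches the paper's feasibility computation: linearizing the two inequalities of Point~\ref{invariant} in the bid (after replacing the trailing ratio factors $r_{x-rs}$ and $r_{x+K(1-r)s}$ by their worst-case value $r$, exactly as the paper does) yields the condition $\tau+(1-\tau)r>\frac{r}{K(1-r)}\bigl(1-(1-\tau)r\bigr)$, which is precisely $K>\frac{\tau r^2+r(1-r)}{\tau(1-r)^2+r(1-r)}$. So you have correctly located where the threshold on $K$ comes from, and your reduction to the worst case $r_x=r$ mirrors the paper's.

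The gap is that you never actually construct $(r_x)$ and $(\beta_x)$, and the step you defer --- upgrading the infinitesimal inequalities to the genuine finite-jump ones --- is the entire content of the lemma, not bookkeeping. Rewritten, the two inequalities of Point~\ref{invariant} read $r_{x-rs}-r_x \leq r(1-r)\beta_x s\,[\tau+(1-\tau)r]$ and $r_x-r_{x+K(1-r)s} \geq r(1-r)\beta_x s\,[1-(1-\tau)r]$; they lock the decay of $r_x-r$ to the decay of $\beta_x$ from both sides, simultaneously for every $s$ in the finite set $S$ and every $x\geq 1$. A schedule obtained by ``integrating a rate inside the derivative window'' with $\beta_x$ merely ``positive and eventually vanishing'' will in general violate one of the two for finite $s$ unless both sequences decay at a matched geometric rate. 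The paper's key move, absent from your proposal, is the explicit ansatz $r_x=r+(1-r)\gamma^{x-1}$ and $\beta_x=\beta\gamma^{x-1}$, under which the two inequalities become $\gamma\geq\bigl(1+\beta rs[\tau+(1-\tau)r]\bigr)^{-1/(rs)}$ and $\gamma\leq\bigl(1-\beta rs[1-(1-\tau)r]\bigr)^{1/(K(1-r)s)}$; monotonicity in $s$ reduces these to the extreme strengths $\min(S\setminus\set{0})$ and $\max S$, and your first-order computation is then exactly the Taylor-expansion argument showing a suitable $\gamma$ exists once $\beta$ is small enough. The same ansatz is also what makes Point~\ref{win} verifiable (via $1-r_x=(1-\gamma^{x-1})(1-r)$ and the lower bound on $\gamma$); your proposed choice of $r_1$ is an unsupported guess. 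Without an explicit, verifiable construction the proof is incomplete, even though the feasibility analysis is correct.
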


We first show how Lemma~\ref{lem:ratios} implies Theorem~\ref{thm:SCC-MP}.
\begin{proof}[Proof that Lemma~\ref{lem:ratios} implies Lemma~\ref{lem:intermediate}]
Fix $\epsilon>0$, we construct strategy for Max guaranteeing a payoff greater than $-\epsilon$, as wanted. Observe that
\begin{equation*}
    \frac{r}{r+(1-r)\frac{\tau r^2+r(1-r)}{\tau (1-r)^2 +r(1-r)}} = \frac{r(\tau (1-r)+r)}{\tau r(1-r)+r^2+\tau r^2+r(1-r)} = \frac{r+\tau(1-r)}{1+\tau} = F(\tau,r).
\end{equation*}
Thus, since by assumption $\MP(\RTB^{F(\tau,r)}(\G))=0$ and $\MP(\RTB^p(\G))$ is a continuous function in $p\in [0,1]$ \cite{Cha12,Sol03}, we can pick $K>F(\tau,r)$ such that $\MP(\RTB^{\frac{r}{r+(1-r)K}}(\G))>-\epsilon$. \\
We now describe Max's strategy. We think of the change in \Max's ratio as a walk on $\Real_{\geq 1}$. Each position $x \in \Real_{\geq 1}$ is associated with a ratio $r_x$. The walk starts in a position $x_0$ such that \Max's initial ratio is at least $r_{x_0}$. Let $\nu = r$ and $\mu = K(1-r)$. Suppose the token is placed on a vertex $v \in V$. Then, \Max's bid is $r\cdot(1-r)\cdot \beta_x \cdot \St(v)$ (when ratios of Max and Min are normalized to sum up to $1$) and he proceeds to $v^+$ upon winning. If \Max wins, the walk proceeds up $\mu \cdot \St(v)$ steps to $x+ \mu\St(v)$, and if he loses, the walk proceeds down to $x-\nu\St(v)$. Suppose \Min fixes some strategy and let $\pi=v_1,\ldots,v_n$ be a finite prefix of the play that is generated by the two strategies. Suppose the walk following $\pi$ reaches $x \in \Real$. Then, using the terminology of the previous section, we have $x = x_0 - G(\pi) \cdot \nu + I(\pi) \cdot \mu$. Lemma~\ref{lem:ratios} shows that the walk always stays above $1$, thus $x \geq 1$. Combining with Lemma~\ref{lem:magic}, we get $\frac{\nu+\mu}{\nu \cdot \mu} (1-x_0)+ P+ (n-1)\cdot\MP(\RTB^{\frac{\nu}{\nu+\mu}}(\G)) \leq E(\pi)$. Thus, dividing both sides by $n$ and letting $n\rightarrow \infty$, since $x_0$ and $P$ are constants depending only on $K$ we conclude that this strategy guarantees payoff at least $\MP(\RTB^{\frac{\nu}{\nu+\mu}}(\G))>-\epsilon$.
\end{proof}

We continue to prove Lemma~\ref{lem:ratios}.
\begin{proof}[Proof of Lemma~\ref{lem:ratios}]
Note that $\frac{\tau r^2+r(1-r)}{\tau (1-r)^2 +r(1-r)}$ is well-defined for $r\in (0,1)$. Fix $\tau\in[0,1]$ and $r \in (0,1)$. Let $K > \frac{\tau r^2+r(1-r)}{\tau (1-r)^2 +r(1-r)}$. Observe that the two inequalities in Point~\ref{invariant} are equivalent to:
\begin{equation*}
\begin{split}
    r_{x-rs}-r_x &\leq \tau r(1-r)\beta_x s+(1-\tau)r(1-r)\beta_x s r_{x-rs}, \\
    r_x-r_{x+K(1-r)s} &\geq r(1-r)\beta_xs-(1-\tau)r(1-r)\beta_xsr_{x+K(1-r)s}.
\end{split}
\end{equation*}
Point~\ref{decrease} combined with monotonicity in the above expressions, implies that we can replace the last term in each of them by $r$ in order to obtain stronger inequalities. Therefore, it suffices for $(r_x)_{x\geq 1}$ and $(\beta_x)_{x\geq 1}$ to satisfy
\begin{equation*}
\begin{split}
    r_{x-rs}-r_x &\leq \tau r(1-r)\beta_x s+(1-\tau)r(1-r)\beta_x s r, \\
    r_x-r_{x+K(1-r)s} &\geq r(1-r)\beta_xs-(1-\tau)r(1-r)\beta_xsr,
\end{split}
\end{equation*}
which is equivalent to 
\begin{equation} \label{eq:C}
\begin{split}
    r_{x-rs}-r_x &\leq r(1-r)\beta_x s[\tau +(1-\tau)r], \\
    r_x-r_{x+K(1-r)s} &\geq r(1-r)\beta_x s[1-(1-\tau)r].
\end{split}
\end{equation}
We seek $(r_x)_{x\geq 1}$ and $(\beta_x)_{x\geq 1}$ in the form $r_x=\gamma^{x-1}+(1-\gamma^{x-1})r$ and $\beta_x=\beta\gamma^{x-1}$ for some $\gamma,\beta\in (0,1)$. Note that this choice ensures Points~\ref{legal} and~\ref{decrease}. 
Therefore, we just need to show that we can find $\gamma,\beta\in [0,1]$ for which the inequalities in \eqref{eq:C} hold for any $s \in S$. 
Substituting $r_x$ and $\beta_x$ in terms of $\gamma$ and $\beta$, the inequalities in \eqref{eq:C} reduce to
\begin{equation*}
\begin{split}
    r_{x-rs}-r_x &= \gamma^{x-1}(\gamma^{-rs}-1)(1-r) \stackrel{?}{\leq} \beta\gamma^{x-1}r(1-r)s[\tau+(1-\tau)r], \\
    r_x-r_{x+K(1-r)s} &= \gamma^{x-1}(1-\gamma^{K(1-r)s})(1-r) \stackrel{?}{\geq} \beta\gamma^{x-1}r(1-r)s[1-(1-\tau)r].
\end{split}
\end{equation*}
First, when $s=0$, both sides of both inequalities are equal to $0$ so both inequalities clearly hold. Recall that $S$ is a finite set of non-negative strengths. Thus, when $s > 0$, it takes values in $0<s_1 \leq \ldots \leq s_n$, and the above inequalities are equivalent to
\begin{equation} \label{eq:gamma}
\begin{split}
    \gamma &\geq \big( 1+\beta rs[\tau+(1-\tau)r] \big)^{-\frac{1}{rs}}, \\
    \gamma &\leq \big( 1-\beta rs[1-(1-\tau)r] \big)^{\frac{1}{K(1-r)s}}.
\end{split}
\end{equation}
Since both of these expressions are in $(0,1)$, to conclude that $\gamma,\beta\in(0,1)$ exist, it suffices to show that there is some $\beta\in(0,1)$ such that

\begin{equation}\label{eq:maxmin}
    \max_{s\in \{s_1,\dots,s_n\}}\big( 1+\beta rs[\tau+(1-\tau)r] \big)^{-\frac{1}{rs}} \leq \min_{s\in \{s_1,\dots,s_n\}}\big( 1-\beta rs[1-(1-\tau)r] \big)^{\frac{1}{K(1-r)s}}.
\end{equation}

\noindent Note that the LHS of \eqref{eq:maxmin} is monotonically increasing in $s>0$ whereas the RHS is monotonically decreasing in $s>0$, therefore it suffices to find $\beta\in(0,1)$ for which

\begin{equation}\label{eq:reduce}
    \big( 1+\beta rs_n[\tau+(1-\tau)r] \big)^{-\frac{1}{rs_n}} \leq \big( 1-\beta rs_1[1-(1-\tau)r] \big)^{\frac{1}{K(1-r)s_1}}.
\end{equation}

\noindent By Taylor's theorem $(1+y)^\alpha=1+\alpha y+O(y^2)$, so Taylor expanding both sides of \eqref{eq:reduce} in $\beta>0$ we get 
\begin{equation*}
\begin{split}
    &\big( 1+\beta rs_n[\tau+(1-\tau)r] \big)^{-\frac{1}{rs_n}} = 1-\beta [\tau+(1-\tau)r] +O(\beta^2), \\
    &\big( 1-\beta rs_1[1-(1-\tau)r] \big)^{\frac{1}{K(1-r)s_1}} = 1-\beta \frac{r}{K(1-r)}[1-(1-\tau)r] +O(\beta^2).
\end{split}
\end{equation*}
Therefore, if we show that $[\tau+(1-\tau)r]>\frac{r}{K(1-r)}[1-(1-\tau)r]$, the linear coefficient of $\beta$ on the LHS of \eqref{eq:reduce} will be strictly smaller than the linear coefficient of $\beta$ on the RHS. Thus, for sufficiently small $\beta>0$, \eqref{eq:reduce} will hold, which concludes the proof of the lemma. This condition is equivalent to
\begin{equation*}
   K > \frac{r[1-(1-\tau)r]}{(1-r)[\tau+(1-\tau)r]} = \frac{r[\tau r+(1-r)]}{(1-r)[\tau(1-r)+r]} = \frac{\tau r^2+r(1-r)}{\tau (1-r)^2 +r(1-r)},
\end{equation*}
which is true by assumption. Thus, Points~\ref{legal},~\ref{decrease}, and~\ref{invariant} hold. In Appendix~\ref{app:point 2}, we show that Point~\ref{win} holds.
\end{proof}

\subsection{General Mean-Payoff Taxman Games}
We extend the solution to general games. Recall that the threshold ratio in mean-payoff games is a necessary and sufficient initial ratio with which \Max can guarantee a payoff of at least $0$.
\begin{theorem}
\label{thm:general MP}
Threshold ratios exist in mean-payoff taxman games.
\end{theorem}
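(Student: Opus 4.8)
The plan is to reduce a general mean-payoff taxman game to its strongly-connected building blocks, which are solved by Theorem~\ref{thm:SCC-MP}, and to glue these together by a reachability argument on the condensation, in the same spirit that Theorem~\ref{thm:parity} reduces parity to reachability in the qualitative setting. First I would exploit prefix-independence of the payoff: decompose $\G$ into strongly-connected components (SCCs) and pass to the condensation, which is a DAG. Since a DAG can be descended only finitely often, every infinite play eventually remains inside a single SCC $S$, and by prefix-independence the payoff of the play equals the payoff of its suffix inside $S$. Hence the only things that matter for the payoff are the SCC in which the play settles and the budget with which the players keep playing there.

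For each SCC $S$, viewed as a strongly-connected subgame, Theorem~\ref{thm:SCC-MP} gives the value $c_S(r) = \MP(\RTB^{F(\tau,r)}(S))$ as a function of the ratio $r$. Because $F(\tau,\cdot)$ is continuous and, for $\tau<1$, strictly increasing, while $\MP(\RTB^{p}(S))$ is continuous and non-decreasing in $p$, the map $r \mapsto c_S(r)$ is continuous and monotone, so it crosses $0$ at a well-defined threshold $t_S \in [0,1]$ (the Richman case $\tau=1$ gives the constant value $\MP(\RTB^{0.5}(S))$ and hence $t_S \in \{0,1\}$). Concretely, if \Max settles in $S$ with ratio exceeding $t_S$ he can guarantee payoff $> c_S(r) - \epsilon > 0$ for small $\epsilon$ via the \Max-strategy of Lemma~\ref{lem:intermediate}, and dually \Min guarantees a negative payoff below $t_S$. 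This identifies $t_S$ as the threshold for the whole-game objective among all plays that settle in $S$.

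It remains to glue the SCCs together, which I would do by induction over the condensation in reverse topological order. Having computed a per-vertex threshold $t(u)$ for every vertex $u$ in every SCC strictly below $S$, I form a hybrid game on $S$: its internal edges are kept, while each edge leaving $S$ to a lower vertex $u'$ is redirected to an absorbing vertex carrying the already-known threshold $t(u')$. In this hybrid game \Max has two ways to secure a non-negative payoff from a vertex $v\in S$ --- settle inside $S$ (requiring ratio $> t_S$) or steer the token out of $S$ into some lower vertex $u'$, arriving with ratio $> t(u')$ --- while \Min has the dual options. The crucial enabling fact is Lemma~\ref{lem:SCC-qual}: inside a strongly-connected region either player can force the token to any chosen vertex using an arbitrarily small fraction of the budget, so navigation toward a favorable exit costs a negligible ratio and can be decoupled from the budget that is actually needed to win after settling. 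Combining the strength-based \Max-strategy of the strongly-connected case with the reachability thresholds of Theorem~\ref{thm:reach} should then yield, for each $v\in S$, a single threshold $t(v)$.

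The step I expect to be the main obstacle is precisely this last combination: showing that the hybrid ``settle-or-exit'' objective on each SCC admits a genuine, gap-free threshold. The difficulty is the coupling between reachability and budget --- it is not enough for \Max to reach a good lower SCC, he must reach it with a ratio above that SCC's threshold, while simultaneously \Min spends budget to drag the token toward SCCs that are bad for \Max. I would handle this by establishing the two one-sided bounds separately: monotonicity gives that the set of \Max-winning ratios at $v$ is up-closed, and the analogous statement holds for \Min, so the two sets meet at a single point once a gap is ruled out. Ruling out the gap amounts to a determinacy statement, which I would prove constructively by exhibiting explicit optimal strategies on both sides --- the \Max-strategy from Lemma~\ref{lem:ratios} together with the navigation strategy of Lemma~\ref{lem:SCC-qual}, and, via the negation trick used in the proof that Lemma~\ref{lem:intermediate} implies Theorem~\ref{thm:SCC-MP}, the symmetric \Min-strategy --- so that whichever side of $t(v)$ the initial ratio falls on, the corresponding player wins.
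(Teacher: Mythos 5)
Your proposal is correct in substance and shares the paper's skeleton --- solve each strongly-connected piece via Theorem~\ref{thm:SCC-MP}, then glue with a reachability-style argument --- but the decomposition you use is genuinely different. The paper computes a threshold $r_i$ only for each \emph{bottom} SCC and then plays a single ``generalized'' reachability game on the entire remaining graph, in which \Max's objective is to enter some BSCC $S_i$ with ratio above $r_i$; existence of a threshold for that game is delegated to the argument of \cite{LLPSU99}. You instead perform a full reverse-topological induction over the condensation, attaching to every SCC (bottom or not) a hybrid ``settle-or-exit'' game whose exits are absorbing vertices labelled with already-computed thresholds. Your version buys explicitness about a point the paper glosses over: a play can in principle settle in a \emph{non-bottom} SCC, and the paper's reduction silently relies on the fact (via Lemma~\ref{lem:SCC-qual}) that either player can force an exit from such an SCC at negligible cost, so that settling there only happens when neither player objects; your induction handles this case head-on. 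The paper's version buys brevity: one reachability-type game instead of a cascade of hybrid games, at the price of leaving the non-bottom-SCC issue implicit. The step you flag as the main obstacle --- proving the hybrid objective is gap-free, i.e.\ that reaching a favourable exit can be decoupled from arriving there with a sufficient ratio --- is precisely the step the paper also does not spell out, so you have not missed anything the paper supplies; your proposed resolution (up-closedness of the winning ratio sets on each side plus explicit strategies from Lemmas~\ref{lem:SCC-qual} and~\ref{lem:ratios} and the negation trick) is the right way to close it. One small point to tighten: since $r \mapsto \MP(\RTB^{F(\tau,r)}(S))$ is only non-decreasing, it may vanish on an interval rather than at a single crossing, so you should define $t_S$ as the infimum of the ratios with non-negative value (the paper has the same imprecision).
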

\begin{proof}
Consider a mean-payoff taxman game $\G = \zug{V, E, w}$ with taxman parameter $\tau$. If $\G$ is strongly-connected, then by Theorem~\ref{thm:SCC-MP}, the threshold ratio in all vertices in $\G$ is the same and is $r \in (0,1)$ for $r$ such that $\MP(\RTB^{F(\tau, r)}(\G)) = 0$. If no such $r$ exists, then either $\MP(\RTB^{F(\tau, 1)}(\G)) < 0$, in which case the threshold ratios are $1$, or $\MP(\RTB^{F(\tau, 0)}(\G)) > 0$, in which case the threshold ratios are $0$. The proof for general games follows along the same lines as the proof for reachability games. For each bottom strongly-connected component $S_i$ of $\G$ we find the threshold ratio $r_i \in (0,1)$ as in the above. We play a ``generalized'' reachability game on $\G$ as follows. The game ends once the token reaches one of the BSCCs in $\G$. \Max wins the game iff the first time the game enters a BSCC $S_i$, \Max's ratio is greater than $r_i$. Showing existence of threshold ratios in the generalized game follows the same argument as for reachability games \cite{LLPSU99}.
\end{proof}

\section{Computational Complexity}
We show, for the first time, computational complexity results for taxman games. We study the following problem, which we call THRESH: given a taxman game $\G$ with taxman parameter $\tau$ and a vertex $v_0$ in $\G$, decide whether $\thresh(v_0) \geq 0.5$. The correspondence in Theorem~\ref{thm:SCC-MP} gives the second part of the following theorem, and for the first part, in Appendix~\ref{app:complexity}, we show a reduction from THRESH to the {\em existential theory of the reals} \cite{Can88}.

\begin{theorem}
\label{thm:complexity}
For taxman reachability, parity, and mean-payoff games THRESH is in PSPACE. For strongly-connected mean-payoff games, THRESH is in NP $\cap$ coNP.
\end{theorem}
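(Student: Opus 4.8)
The plan is to treat the two bounds separately: the NP $\cap$ coNP bound follows almost directly from the probabilistic correspondence of Theorem~\ref{thm:SCC-MP}, while the PSPACE bound comes from encoding the fixpoint characterizations of threshold ratios as a sentence in the existential theory of the reals (ETR), which is decidable in PSPACE \cite{Can88}.

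\textbf{The NP $\cap$ coNP bound.} The key observation is that $F(\tau, 0.5) = \frac{0.5 + \tau\cdot 0.5}{1+\tau} = 0.5$ for \emph{every} taxman parameter $\tau$. For a strongly-connected mean-payoff game, Theorem~\ref{thm:SCC-MP} together with the proof of Theorem~\ref{thm:general MP} identifies $\thresh(v_0)$ as the value $r$ for which $\MP(\RTB^{F(\tau,r)}(\G)) = 0$, and the map $r \mapsto \MP(\RTB^{F(\tau,r)}(\G))$ is continuous (by the continuity of $\MP(\RTB^p(\G))$ in $p$ used earlier) and non-decreasing, since $F(\tau,\cdot)$ is non-decreasing and giving \Max a larger moving probability $p$ can only raise the value. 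Hence $\thresh(v_0) \geq 0.5$ if and only if $\MP(\RTB^{0.5}(\G)) \leq 0$. As $\RTB^{0.5}(\G)$ is a mean-payoff stochastic game of polynomial size, I reduce the question to deciding the sign of the value of a mean-payoff stochastic game. Using that both players have optimal positional strategies, this is in NP $\cap$ coNP by the standard guess-and-verify argument: guess an optimal positional strategy for one player, fix it to obtain an MDP for the opponent, and compute that MDP's value by linear programming in polynomial time \cite{Con92}; because NP $\cap$ coNP is closed under complement, both the condition $\leq 0$ and its negation are captured.

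\textbf{The PSPACE bound.} For each game type I existentially quantify over the threshold (and auxiliary potential) variables, assert the fixpoint system that characterizes the thresholds as polynomial constraints, and conjoin $\thresh(v_0) \geq 0.5$. For reachability games, Theorem~\ref{thm:reach} characterizes the threshold ratios through the minimal and maximal neighbor thresholds, so I introduce a variable $t_v$ per vertex, set the boundary values $t_t = 0$ and $t_v = 1$ for vertices with no path to $t$, and for every other $v$ encode the choice of $v^-, v^+$ as a finite disjunction over neighbor pairs together with the constraints $t_{v^-} \leq t_u \leq t_{v^+}$ for all $u \in N(v)$ and the taxman equation $t_v = (t_{v^-} + t_{v^+} - \tau\, t_{v^-})/(2 - \tau(1 + t_{v^-} - t_{v^+}))$, cleared of denominators; this is a polynomial-size ETR sentence. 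For parity games I first apply the linear reduction of Theorem~\ref{thm:parity} to a reachability game and reuse this encoding. For mean-payoff games I combine both ideas following the proof of Theorem~\ref{thm:general MP}: for each bottom strongly-connected component $S_i$ I pin down its threshold $r_i$ by the mean-payoff optimality (potential) equations of $\RTB^{F(\tau,r_i)}(S_i)$ forced to value $0$ (with max and min over neighbors again expressed as finite disjunctions, and the boundary cases $r_i \in \{0,1\}$ handled by sign constraints on the value), and then run the reachability-style recursion of the generalized reachability game with the boundary conditions $t_{S_i} = r_i$. All of these are polynomial-size systems of polynomial constraints, so the resulting ETR sentence is decidable in PSPACE.

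\textbf{The main obstacle.} The NP $\cap$ coNP part is clean once the identity $F(\tau,0.5) = 0.5$ is noticed, and the reachability ETR encoding is routine. The delicate step is the ETR encoding for general mean-payoff games: I must express the value of a random-turn stochastic game through its optimality/potential equations, correctly stitch the component thresholds $r_i$ into the generalized reachability recursion, and, most importantly, argue that the existentially quantified system admits the true threshold ratios as its only solution, so that the sentence genuinely decides THRESH rather than the feasibility of a relaxed system. Establishing this uniqueness, together with the careful treatment of the $r_i \in \{0,1\}$ boundary cases, is where the real work lies.
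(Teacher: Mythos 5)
Your proposal is correct and follows essentially the same route as the paper: the NP $\cap$ coNP bound for strongly-connected mean-payoff games via the probabilistic correspondence of Theorem~\ref{thm:SCC-MP} and the known NP $\cap$ coNP complexity of stochastic mean-payoff games, and the PSPACE bound via an encoding of the fixpoint characterization of threshold ratios into the existential theory of the reals. The only cosmetic differences are that you fold the choice of $v^+,v^-$ into a disjunction inside the ETR sentence where the paper guesses it nondeterministically, and you spell out the identity $F(\tau,0.5)=0.5$ and the monotonicity argument that the paper leaves implicit.
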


\section{Discussion}
We study, for the first time, infinite-duration taxman bidding games, which span the spectrum between Richman and poorman bidding. For qualitative objectives, we show that the properties of taxman coincide with these of Richman and poorman bidding. For mean-payoff games, where Richman and poorman bidding have an elegant though surprisingly different mathematical structure, we show a complete understanding of taxman games. Our study of mean-payoff taxman games sheds light on these differences and similarities between the two bidding rules. Unlike previous proof techniques, which were ad-hoc, we expect our technique to be easier to generalize beyond taxman games, where they can be used to introduce concepts like multi-players or partial information into bidding games.

\bibliography{../ga}

\appendix
\section{Proof of Lemma~\ref{lem:SCC-qual}}
\label{app:SCC-qual}
If \PT wins the $k$-th bidding with $1\leq k \leq n$, then the new ratio is
\begin{equation*}
\begin{split}
    \frac{\epsilon -m-mr-\dots-mr^{k-1}+\tau mr^k}{1-\epsilon+\tau m+\tau mr+\dots+\tau mr^{k-1}-mr^k} &= \frac{\epsilon -m\frac{r^k-1}{r-1}+\tau mr^k}{1-\epsilon +\tau m\frac{r^k-1}{r-1}-mr^k} \\
    &> \frac{\epsilon +mr^k(\tau - \frac{1}{r-1})}{1-\epsilon -mr^k(1-\frac{\tau}{r-1})} \\
    &> \frac{\epsilon +mr^k(\tau - \frac{1}{r-1})}{1-\epsilon} \\
    &\stackrel{\tau>\frac{2}{r-1}}{>} \frac{\epsilon}{1-\epsilon} + \frac{mr^k}{(1-\epsilon)(r-1)} \\
    &\geq \frac{\epsilon}{1-\epsilon} + \frac{mr}{(1-\epsilon)(r-1)},
\end{split}
\end{equation*}
Thus, the ratio increases by a fixed amount $b=\frac{mr}{(1-\epsilon)(r-1)}>0$. Let $\epsilon_1$ be the new (normalized) ratio of \PO. Since $0<m<\frac{\epsilon}{r^n}<\frac{\epsilon_1}{r^n}$, \PO can repeat the same process and again either win the game in at most $n$ steps or increase his budget ratio by at least $b$. Note that $\frac{mr}{(1-\epsilon)(r-1)}$ is an increasing function of $\epsilon$. Proceeding like this, eventually either \PO wins the game, or his normalized budget exceeds $1-2^{-n}$, in which case he can win $n$ consecutive biddings by bidding $2^{-n}, 2^{-n+1},\dots,2^{-1}$.

\section{Proof of Lemma~\ref{lem:ratios}}
\label{app:point 2}
We conclude by showing that Point~\ref{win} holds. Let $s \in S \setminus \set{0}$ and $1 \leq x < 1+rs$. Intuitively, if \Min wins the bidding, we reach a position that is less than $1$. We show that $1-r_x<s r(1-r)\beta_x$, therefore proving that \Min has insufficient budget to win this bid. 
Taking $(\gamma_x)_{x\geq 1}$ and $(\beta_x)_{x\geq 1}$ as in the above, we have $1-r_x=(1-\gamma^{x-1})(1-r)$ and $\beta_x=\beta\gamma^{x-1}$. Hence it suffices to prove that $\gamma^{x-1} > \frac{1}{1+sr\beta}$.
As $x-1<sr$ and $\gamma\in(0,1)$, we have $\gamma^{x-1}>\gamma^{sr}$. On the other hand, we established \eqref{eq:gamma}, thus as $[\tau+(1-\tau)r]\leq 1$ and $sr\beta\geq 0$, we conclude $\gamma^{sr} \geq \frac{1}{1+sr\beta[\tau+(1-\tau)r]} \geq \frac{1}{1+sr\beta}$.
\stam{
\begin{equation*}
    \gamma^{sr} \geq \frac{1}{1+sr\beta[\tau+(1-\tau)r]} \geq \frac{1}{1+sr\beta}.
\end{equation*}}

\section{Proof of Theorem~\ref{thm:complexity}}
\label{app:complexity}
For strongly-connected mean-payoff games, the theorem follows from Theorem~\ref{thm:SCC-MP} and the fact that solving stochastic mean-payoff games is in NP $\cap$ coNP \cite{ZP96}. In the other cases, we reduce THRESH to the existential theory of the reals, which is known to be in PSPACE \cite{Can88}. We describe the solution for reachability games and the reduction for the other objectives is similar. We start by guessing, for each $v \in V$, two vertices $v^+$ and $v^-$.  For each vertex $v$, we introduce a variable $x_v$. The constraints we use are the following. For the target $t \in V$ of \PO, we add a constraint $x_t = 0$. For every $v \in V$ from which there is no path to $t$, we add $x_v = 1$. For every other vertex $v$, we add constraints $x_{v^-} \leq x_u \leq x_{v^+}$, for every $u \in N(v)$, and a constraint $x_v = \big(x_{v^-}+x_{v^+}-\tau \cdot x_{v^-}\big)/\big(2-\tau \cdot (1+x_{v^-}-x_{v^+})\big)$. Finally, for the initial vertex $v_0$, we add a constraint $x_{v_0} \geq 1/2$. By Theorem~\ref{thm:reach}, the program has a solution iff  $\thresh(v_0) \geq 0.5$, and we are done.

\end{document}